\numberwithin{equation}{section}
\newtheorem{theorem}{Theorem}[section]
\newtheorem{lemma}[theorem]{Lemma}
\newtheorem{proposition}[theorem]{Proposition}
\newtheorem{corollary}[theorem]{Corollary}
\newtheorem{definition}[theorem]{Definition}
\newtheorem{remark}[theorem]{Remark}
\newtheorem{example}[theorem]{Example}
\title{ Code Size Constraints in $b$-symbol Read Channels: A Bound Analysis}
\author{ Gyanendra K. Verma}
\address{Department of Mathematical Sciences, United Arab Emirates University, Al Ain, 15551, UAE.}
\email{gkvermaiitdmaths@gmail.com}
\author{Nupur Patanker}
\address{Indian Institute of Science Bangalore, India}
\email{ nupurp@iisc.ac.in}
\author{Abhay Kumar Singh$^*$}
\address{Department of Mathematics and Computing, Indian Institute of
Technology (ISM), Dhanbad, India.}
\email{abhay@iitism.ac.in;}
\thanks{2010 Mathematics Subject Classification:  94B05, 11T71.\\
*Corresponding author. Partial work was done while G. K. Verma was affiliated with the Department of Mathematics, Indian Institute of Technology Delhi. G. K. Verma is financially supported by UAEU grant G00004614. N. Patanker was supported in part by an IoE-IISc Postdoctoral Fellowship awarded by the Indian Institute of Science Bangalore, and in part by the grant DST/INT/RUS/RSF/P-41/2021 from the Department of Science \& Technology, Govt. of India.
}
\begin{document}
\date{}
\keywords{Symbol-pair distance, $b$-symbol distance, linear programming bound, Gilbert-Varshamov bound, Elias bound, coding theory}
\maketitle

\begin{abstract}
In classical coding theory, error-correcting codes are designed to protect against errors occurring at individual symbol positions in a codeword. However, in practical storage and communication systems, errors often affect multiple adjacent symbols rather than single symbols independently. To address this, symbol-pair read channels were introduced \cite{Yuval2011}, and later generalized to $b$-symbol read channels \cite{yaakobi2016} to better model such error patterns. $b$-Symbol read channels generalize symbol-pair read channels to account for clustered errors in modern storage and communication systems. By developing bounds and efficient codes, researchers improve data reliability in applications such as storage devices, wireless networks, and DNA-based storage. Given integers $q$, $n$, $d$, and $b \geq 2$, let $A_b(n,d,q)$ denote the largest possible code size for which there exists a
$q$-ary code of length $n$ with minimum $b$-symbol distance at least $d$. In \cite{chen2022}, various upper and lower bounds on $A_b(n,d,q)$ are given for $b=2$. In this paper, we generalize some of these bounds to the $b$-symbol read channels for $b>2$ and present several new bounds on $A_b(n,d,q)$. In particular, we establish the linear programming bound, a recurrence relation on $A_b(n,d,q)$, the Johnson bound (even), the restricted Johnson bound, the Gilbert-Varshamov-type bound, and the Elias bound for the metric of symbols $b$, $b\geq 2$. Furthermore, we provide examples demonstrating that the Gilbert-Varshamov bound we establish offers a stronger lower bound than the one presented in \cite{Song2018}. Additionally, we introduce an alternative approach to deriving the Sphere-packing and Plotkin bounds.

\end{abstract}

\section{Introduction}
In coding theory, the function 
$A_H(n,d,q)$  represents the maximum number of codewords in a code of length $n$ over an alphabet of size 
$q$ with a minimum Hamming distance of at least 
$d$. Determining its exact value is a challenging problem, but various upper and lower bounds provide useful estimates. The Singleton Bound establishes a simple upper limit, while the Sphere Packing Bound and Plotkin Bound offer more refined constraints based on geometric considerations. The Linear Programming Bound provides even tighter estimates using optimization techniques. On the lower end, the Gilbert-Varshamov Bound guarantees the existence of codes with a certain minimum size, and the Griesmer Bound helps in understanding the length of linear codes. These bounds play a crucial role in designing efficient error-correcting codes, balancing redundancy and error detection capability to optimize communication systems.\par 
The $b$-symbol distance is a generalization of the Hamming distance that provides a more refined measure of differences between codewords, particularly in scenarios where errors tend to occur in bursts rather than independently at individual positions. In this metric, rather than counting all differing symbols between two codewords, the distance measure considers contiguous blocks of up to 
$b$-symbol as single units.
 Cassuto and Blaum \cite{Yuval2011} introduced symbol-pair distance (i.e. $2$-symbol distance) which is further generalized to $b$-symbol distance for $b>2$ by Yaakobi et al. \cite{yaakobi2016}. 
These generalizations are motivated by the limitations of traditional high-density data storage systems, which do not permit the retrieval of individual symbols from the channel. The $b$-symbol coding framework enables the simultaneous reading of overlapping $b$-symbols instead of processing them separately. This innovative approach has garnered significant attention, leading to substantial progress in the study of $b$-symbol codes (see \cite{chee2013,ding2018,Dinh2018,Elishco2020,Luo2023}). Similar to Hamming distance, for the $b$-symbol distance, we have a quantity $A_b(n,d,q)$ that for given integers $n$ and $d$ with $1 \leq d \leq n$, denotes the maximum number of codewords in a code over an alphabet of size $q$ of length $n$ and  $b$-symbol distance at least $d$. Similar to its counterparts in the Hamming metric, the quantity $A_b(n,d,q)$ plays an important role, and various bounds on its value have been determined. The $b$-symbol Sphere Packing Bound and Gilbert-Varshamov bound has been determined in \cite{Yuval2011} for $b=2$ and in \cite{Song2018}, the authors established Sphere Packing Bound and Gilbert–Varshamov Bound for $b>2$. The $b$-symbol Singleton bound has been studied in \cite{chee2013} for $b=2$ and in \cite{ding2018} for $b> 2$. The Johnson Bound has been determined in \cite{Elishco2020} for even minimum pair-distance (binary alphabet) and in \cite{chen2022} for arbitrary minimum pair-distance. Linear Programming bound for $b=2$ has been derived in \cite{Elishco2020}, providing an optimization-based upper bound on code size. The Griesmer Bound, established in \cite{Luo2024}, offers a lower bound on the length of linear codes given a $b$-symbol distance. The Sphere-Covering Bound, determined in \cite{chen2022}, evaluates how efficiently codewords can cover the space while maintaining the required $b$-symbol distance, while the Plotkin Bound, also derived in \cite{chen2022}, provides an upper limit on code size when the $b$-symbol distance is relatively large.
Asymptotic Bounds on code size have been studied in \cite{Yuval2011}, \cite{Yuval2}, \cite{Liu2019}. Chen \cite{Chen2025} studied covering codes and covering radii over $b$-symbol read channels. A code is considered optimal if it meets a given bound. In particular, when the minimum $b$-symbol distance of a code achieves the Singleton bound, it is referred to as a $b$-symbol maximum distance separable (MDS) code. Various constructions of $b$-symbol codes attaining specific bounds have been explored in \cite{chee2013,Chen2025,ding2018,Luo2023,Luo2024,yaakobi2016,Zhao2025}. 

In this paper, we extend the study of $b$-symbol codes by establishing several new upper and lower bounds on $A_b(n,d,q)$. While some of these bounds are entirely new, others serve as generalizations of existing results that were previously known only for the special case of 
$b=2$. By broadening these bounds to arbitrary values of 
$b$, we provide a deeper understanding of the structure and limitations of 
$b$-symbol codes, which are particularly relevant for coding scenarios where errors tend to occur in bursts rather than independently at individual positions. A brief list of known bounds and the contribution of this paper is listed in Table \ref{tab1}. The remainder of the paper is structured as follows: Section \ref{pre} provides fundamental definitions and a review of known results. The subsequent Section \ref{boundoncode} establishes a key recurrence relation on $A_b(n,d,q)$, providing a recursive framework. In Section \ref{additivebound}, we derive several bounds on $A_b(n,d,q)$ through an algebraic approach. In Section \ref{slpbound}, we present the linear programming bound for $b$-symbol codes.
\begin{table}[]
    \centering
    \begin{tabular}{|c|c|c|}
    \hline
       \textbf{Bound}  & $b=2$& $b\geq 2$ \\
         \hline
       \text{Sphere-packing bound}  &  \cite{Yuval2011}  & \cite{Song2018}, \textbf{New$^{\#}$}, Corollary \ref{spb}   \\
       \hline
       \text{Gilbert–Varshamov bound}& \cite{Yuval2011} &\cite{Song2018}, \textbf{New$^{*}$}, Corollary \ref{Gilbert}  \\
       \hline
       \text{Singleton bound}  & \cite{chee2013}  & \cite{ding2018} \\
       \hline
       \text{Johnson bound }  & \cite{Elishco2020,chen2022} & \textbf{ New}, Theorem \ref{jboundeven} \\
       \hline
       \text{Linear Programming bound}  & \cite{Elishco2020} & \textbf{New}, Theorem \ref{lpbound} \\
       \hline 
       \text{Recurrence relation}  & \cite{chen2022} & \textbf{New}, Theorem \ref{abnqd}\\
       \hline
       \text{Plotkin bound}  &  \cite{Wang2020} & \cite{yang2016}, \textbf{New$^{\#}$}, Corollary \ref{plotb} \\
       \hline
        \text{Restricted Johnson bound}  & \cite{Wang2020} & \textbf{New}, Corollary \ref{rjbound} \\
        \hline
     \text{Griesmer bound }  & \cite{Luo2024} &  \cite{Luo2024}\\
        \hline
        
        \text{Elias bound }  & \cite{Yan2020} & \textbf{New}, Corollary \ref{elias} \\
        \hline
    \end{tabular}
    \caption{Bounds on $A_b(n,d,q)$ for $b$-symbol distance. 
    $\#$ : we provide an alternate approach to derive the bounds, $*$ : we give an improved version of the bound.}
    \label{tab1}
\end{table}



\section{Preliminaries}\label{pre}

Throughout this paper by $\mathbb{F}_{q}$ we denote a finite field with $q$ elements where $q$ is a power of prime $p$. A code $C$ over $\mathbb{F}_{q}$ of length $n$ is a non-empty subset $C$ of $\mathbb{F}_q^n$ and its elements are known as codewords. The Hamming weight of an element $x$  in $\mathbb{F}_q^n$ is the number of nonzero coordinates in $x$, denoted as $w_H(x)$. The Hamming distance between two vectors $x$ and  $y$  in $\mathbb{F}_q^n$ is given by the Hamming weight of their difference, $w_H(x-y)$. The minimum distance $d_H$  of $C$ is defined as the minimum Hamming distance between any two distinct codewords in $C$. \par
For an integer $1\leq b\leq n$, define a map $\pi: \mathbb{F}_q^n \to (\mathbb{F}_q^b)^n$ given by
$$\pi(y_1,y_2,\dots,y_n)=((y_1,y_2,\dots,y_b),(y_2,y_3,\dots,y_{b+1}),\dots,(y_n,y_1,\dots,y_{b-1})).$$
\begin{definition}[$b$-symbol weight]
For $y\in \mathbb{F}_q^n,$ $b$-symbol weight of $y$ is defined as
\begin{align*}
    w_b(y)=&w_H(\pi(y))=w_H(((y_1,y_2,\dots,y_b),(y_2,y_3,\dots,y_{b+1}),\dots,(y_n,y_1,\dots,y_{b-1})))\\
    =&|\{1\leq i\leq n : (y_i,y_{i+1},\dots,y_{i+b-1})\neq (0,0,\dots,0)\}|,
\end{align*}
 \text{ where indices are taken}$\pmod{n}$.
\end{definition}
The $b$-symbol distance between two vectors $x,y\in \mathbb{F}_q^n$ is defined as $d_b(x,y)=w_H(\pi(x)-\pi(y))$.  The minimum $b$-symbol distance of $C$ is defined as 
$$d_b(C)=min\{d_b(x,y)| \ x\neq y,\  x,y \in C\}.$$
We denote $(n,M,d_b)_q$, the parameters of a code $C$ over $\mathbb{F}_q$ of length $n$ and size $M$ with minimum $b$-symbol distance $d_b.$ It is easy to see  $w_1(x)=w_H(x)$ and $d_1(x,y)=d_H(x,y)$. 
$$wt_H(y) + b -1 \leq  wt_b(y) \leq  b \cdot wt_H(y).$$

$$d_b(C)\leq min\{n-k+b,n\}.$$
We define a $b$-ball of radius $r$ with center $a\in \mathbb{F}_q^n$ as
$$B_b(n,r,a)=\{y\in \mathbb{F}_q^n |\  d_b(y,a)\leq r\}$$ 
and a $b$-sphere of radius $r$ with center $a\in \mathbb{F}_q^n$ as
$$S_b(n,r,a)=\{y\in \mathbb{F}_q^n| \  d_b(y,a)=r\}.$$ In the $b$-symbol distance, the cardinality of a $b$-ball and a $b$-sphere is independent of the choice of the center. We denote $B_b(n,r)=|B_b(n,r,a)|$ and $S_b(n,r)=|S_b(n,r,a)|$ for some $a\in \mathbb{F}_q^n.$  The expressions for $B_b(n,r)$ and $S_b(n,r)$ have been explicitly derived for the cases $b=2$ and $b=3$ in \cite{Elishco2020, Singh2025}. For arbitrary $b$, a general formula was presented in \cite{Song2018}.  
We denote  $$A_b(n,d,q)=max\{M|\  \text{there exists a } q\text{-ary } (n,M) \text{ code } C  \text{ with } d_b(C)\geq d\}$$ and 
\begin{align*}
  A_b(n,d,w,q)=max&\{M| \ \text{there exist a constant weight }  q\text{-ary } (n,M)  \text{ code } C \\
  &\text{ with } d_b(C)\geq d \text{ and } w_b(c)=w\ \forall c\in C\}.  
\end{align*}
For $b=1,$ we denote $A_1(n,d,q)$ as $A_H(n,d,q)$ and $A_1(n,d,w,q)$ as $A_H(n,d,w,q)$.

\section{Recurrence relation on codes size for $\mathbf{b}$-symbol read channels}\label{boundoncode}
In \cite{chen2022}, the authors provided several bounds on $A_2(n,d,q).$ In this section, we generalize their results for $A_b(n,d,q)$, $b \geq 2$. We begin this section by presenting a few key lemmas that will be essential in the proof of our main result.
\begin{lemma}\label{3b-1} 
Let $l,b$ be integers such that $n \geq b\geq 2$ and $n \geq l\geq b$. For $z=(z_1,z_2,\dots,z_n)\in \mathbb{F}_q^n$, let
    \begin{align*}
    B_1=&((z_{l-b+2},\dots,z_{l+1}),\dots,(z_{l},z_{l+1},\dots,z_{l+b-1})),\\
    B_2=&((z_{n-b+2},\dots,z_n,z_1),\dots,(z_n,z_1,\dots,z_{b-1})) \text{ and }\\
    B_3=&((z_{n-b+2},\dots,z_n,z_{l+1}),\dots,(z_{n},z_{l+1},\dots,z_{l+b-1})).
\end{align*}
    If  $z_i=0$ for $1\leq i\leq l$, then $w_H(B_1)+w_H(B_2)-w_H(B_3)\leq b-1.$
\end{lemma}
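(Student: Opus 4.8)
The plan is to unwind the three blocks into lists of cyclic $b$-windows and then exploit the zero block $z_1=\cdots=z_l=0$. Writing $W_i=(z_i,z_{i+1},\dots,z_{i+b-1})$ for the $i$-th cyclic window (indices taken modulo $n$), a direct reading of the definitions identifies $B_1$ with the list $W_{l-b+2},\dots,W_l$, identifies $B_2$ with the list $W_{n-b+2},\dots,W_n$, and identifies $B_3$ with the ``hybrid'' list whose $j$-th entry, for $1\le j\le b-1$, is $(z_{n-b+1+j},\dots,z_n,\,z_{l+1},\dots,z_{l+j})$. In particular all three blocks have exactly $b-1$ entries, so trivially $w_H(B_1)\le b-1$.

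First I would compare $B_2$ and $B_3$ entry by entry. For each $j$, the $j$-th windows of $B_2$ and $B_3$ share the same length-$(b-j)$ \emph{tail} $(z_{n-b+1+j},\dots,z_n)$; they differ only in their length-$j$ \emph{head}, which is $(z_1,\dots,z_j)$ for $B_2$ and $(z_{l+1},\dots,z_{l+j})$ for $B_3$. Since $j\le b-1<b\le l$, the head of the $B_2$-window lies entirely inside the zero block and vanishes, so the $j$-th window of $B_2$ is nonzero exactly when its tail is nonzero; the $j$-th window of $B_3$ has that same tail and is therefore nonzero whenever the $B_2$-window is. Summing over $j$ gives $w_H(B_2)\le w_H(B_3)$. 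Combining this with $w_H(B_1)\le b-1$ yields
$$w_H(B_1)+w_H(B_2)-w_H(B_3)\le w_H(B_1)\le b-1,$$
which is the claim. If a sharper form is desired, the same window-by-window bookkeeping shows that the left-hand side in fact equals $\sum_{j=1}^{b-1}(T_j\wedge H_j)$, where $T_j,H_j\in\{0,1\}$ indicate whether the tail $(z_{n-b+1+j},\dots,z_n)$, respectively the head $(z_{l+1},\dots,z_{l+j})$, is nonzero; each summand is at most $1$, again giving $b-1$.

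I expect the only genuine obstacle to be the index bookkeeping modulo $n$: one must check that the wrapped coordinates occurring in the $B_2$- and $B_3$-windows (namely $z_1,\dots,z_{b-1}$, and the wrapped portion of $(z_{l+1},\dots,z_{l+j})$ when $l$ is close to $n$) all fall in the zero range $1\le i\le l$ and hence contribute nothing, while the extreme case $l=n$ forces $z=0$ and is trivial. Once the windows are matched up correctly, the inequality requires no computation at all.
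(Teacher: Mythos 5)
Your proof is correct, but it pairs the blocks differently from the paper, so the two arguments are genuinely distinct. Both have the same overall shape: show that one of the two positive terms is dominated by $w_H(B_3)$, then bound the surviving term by the number of windows, $b-1$. The paper dominates $B_1$: it runs a case analysis on the first nonzero entry after the zero block, showing that if $z_{l+1}=\cdots=z_{l+i-1}=0$ and $z_{l+i}\neq 0$ for some $1\le i\le b-1$, then $w_H(B_1)=b-i$ while $w_H(B_3)\ge b-i$, whence $w_H(B_1)+w_H(B_2)-w_H(B_3)\le w_H(B_2)\le b-1$. You dominate $B_2$ instead: the $j$-th windows of $B_2$ and $B_3$ share the tail $(z_{n-b+1+j},\dots,z_n)$, and the head $(z_1,\dots,z_j)$ of the $B_2$-window lies inside the zero block since $j\le b-1<l$, so every nonzero window of $B_2$ forces the corresponding window of $B_3$ to be nonzero, giving $w_H(B_2)\le w_H(B_3)$; then $w_H(B_1)\le b-1$ finishes. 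Your pairing buys two things: it needs no case analysis at all (the hypothesis is used exactly once, to kill the heads of $B_2$), and it yields the exact identity $w_H(B_1)+w_H(B_2)-w_H(B_3)=\sum_{j=1}^{b-1}T_jH_j$, with $T_j$, $H_j$ the tail and head indicators, which makes transparent when the bound $b-1$ can be attained. It also silently absorbs the boundary case $z_{l+1}=\cdots=z_{l+b-1}=0$, which the paper's enumeration of cases (each of which posits some $z_{l+i}\neq 0$) omits; that case is trivial since then $B_1=0$, but your argument does not even need to notice it. The paper's route has the merit of reading uniformly with the harder companion Lemma \ref{lemmab-1}, which it proves by the same style of case analysis; in fact, your per-window Boolean bookkeeping (tails and heads as indicators) would streamline that lemma too, since the corresponding expression there is at most $1$ window by window.
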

\begin{proof}
We prove this lemma by considering the following cases:
\begin{itemize}
    \item If $z_{l+1}\neq 0$, then $w_H(B_1)=b-1$ and $w_H(B_3)=b-1.$ Hence   $w_H(B_1)+w_H(B_2)-w_H(B_3)=w_H(B_2)\leq b-1.$
    \item  If $z_{l+1}=0$ and $z_{l+2}\neq 0$, then $w_H(B_1)=b-2$ and $w_H(B_3)\geq b-2.$ Hence   $w_H(B_1)+w_H(B_2)-w_H(B_3)\leq w_H(B_2)\leq b-1.$
  \item If $z_{l+1}=0,z_{l+2}=0,\dots, z_{l+(i-1)}=0$ and $z_{l+i}\neq 0$  for some $1\leq i\leq b-1$, then $w_H(B_1)=b-i$ and $w_H(B_3)\geq b-i.$ Hence   $w_H(B_1)+w_H(B_2)-w_H(B_3)\leq w_H(B_2)\leq b-1.$  
\end{itemize}
\end{proof}

\begin{lemma}\label{lemmab-1}
    Let $z=(z_1,z_2,\dots,z_n)\in \mathbb{F}_q^n$ and $b,l$ be integers such that $n \geq b\geq 2$ and $n \geq l\geq b.$ Consider
    \begin{align*}
    A_1=&((z_{l-b+2},z_{l-b+3},\dots,z_{l},z_{l+1}), \dots, (z_{l},z_{l+1},\dots, z_{l+b-1})),\\
    A_2=& ((z_{n-b+2},\dots,z_n,z_1),   \dots, (z_{n-1},z_n,z_1,\dots,z_{b-2}), (z_n,z_1,\dots,z_{b-1})),\\
    A_3=& ((z_{l-b+2},z_{l-b+3}, \dots,z_{l},z_1), (z_{l-b+3}, \dots,z_{l},z_1,z_2),\dots, (z_{l},z_1,z_2,\dots, z_{b-1}) ) \text{ and }\\
    A_4=& ((z_{n-b+2},z_{n-b+3},\dots,z_n,z_{l+1}), (z_{n-b+3},\dots,z_n,z_{l+1},z_{l+2}),\dots, (z_n,z_{l+1},z_{l+2},\dots, z_{l+b-1})),
\end{align*}
    then $w_H(A_1)+w_H(A_2)-w_H(A_3)-w_H(A_4)\leq b-1.$
\end{lemma}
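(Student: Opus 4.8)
The plan is to bound the alternating sum termwise. First I would note that each of $A_1,A_2,A_3,A_4$ is a list of exactly $b-1$ blocks of length $b$, which I index by $k=0,1,\dots,b-2$. Writing out the $k$-th block of each list shows that every block splits into a \emph{left part} and a \emph{right part}: the $k$-th block of $A_1$ is $(z_{l-b+2+k},\dots,z_l\,|\,z_{l+1},\dots,z_{l+1+k})$, that of $A_3$ is $(z_{l-b+2+k},\dots,z_l\,|\,z_1,\dots,z_{k+1})$, that of $A_2$ is $(z_{n-b+2+k},\dots,z_n\,|\,z_1,\dots,z_{k+1})$, and that of $A_4$ is $(z_{n-b+2+k},\dots,z_n\,|\,z_{l+1},\dots,z_{l+1+k})$. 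The decisive point is that, for each fixed $k$, only four distinct parts occur: the left parts $L_1^{(k)}:=(z_{l-b+2+k},\dots,z_l)$ (shared by $A_1,A_3$) and $L_2^{(k)}:=(z_{n-b+2+k},\dots,z_n)$ (shared by $A_2,A_4$), and the right parts $R_1^{(k)}:=(z_{l+1},\dots,z_{l+1+k})$ (shared by $A_1,A_4$) and $R_2^{(k)}:=(z_1,\dots,z_{k+1})$ (shared by $A_2,A_3$).

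Next I would introduce Boolean indicators $a_k,c_k,p_k,q_k\in\{0,1\}$ recording whether $L_1^{(k)},L_2^{(k)},R_1^{(k)},R_2^{(k)}$, respectively, contain a nonzero entry. Since a block is nonzero exactly when its left part \emph{or} its right part is nonzero, the contribution of the $k$-th block to the four Hamming weights is $a_k\vee p_k$ for $A_1$, $c_k\vee q_k$ for $A_2$, $a_k\vee q_k$ for $A_3$, and $c_k\vee p_k$ for $A_4$, where $\vee$ denotes logical OR (equivalently, the maximum of the two bits). Summing over $k$, the quantity to be bounded becomes
\[
w_H(A_1)+w_H(A_2)-w_H(A_3)-w_H(A_4)=\sum_{k=0}^{b-2}\big[(a_k\vee p_k)+(c_k\vee q_k)-(a_k\vee q_k)-(c_k\vee p_k)\big].
\]
As the sum has exactly $b-1$ terms, it suffices to prove that each summand is at most $1$.

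The last step is the elementary Boolean inequality $(a\vee p)+(c\vee q)-(a\vee q)-(c\vee p)\le 1$ for all $a,c,p,q\in\{0,1\}$, which I would settle by a short case split on $(p,q)$: when $p=q$ the two pairs of differences cancel and the expression is $0$, while when $p\neq q$ it collapses to $\pm(a-c)$, which is at most $1$. Combining the $b-1$ termwise bounds gives $w_H(A_1)+w_H(A_2)-w_H(A_3)-w_H(A_4)\le b-1$, as required.

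I expect the only genuine obstacle to be the bookkeeping in the first step, namely verifying that the left and right parts truly coincide across the four lists for every $k$, including the cyclic boundary indices taken $\bmod\,n$. This decomposition is purely formal — a block is nonzero iff the set of coordinates it lists contains a nonzero entry, and that set is the union of its left and right parts — so it is insensitive to any coordinate overlaps produced by the cyclic reduction. Consequently, once the pairing is correctly identified, no separate treatment of boundary or degenerate cases is needed.
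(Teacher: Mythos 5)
Your proof is correct, and it takes a genuinely different route from the paper's. The paper argues by an exhaustive case analysis on the pattern of zeros to the left of the cut, i.e. on the largest $i$ with $z_l=z_{l-1}=\dots=z_{l-(i-1)}=0$ and $z_{l-i}\neq 0$ (with further sub-cases on where the first nonzero entry among $z_{l+1},z_{l+2},\dots$ sits); in each case it compares $w_H(A_1)$ with $w_H(A_3)$, shows that any excess of $w_H(A_1)$ over $w_H(A_3)$ forces at least as many nonzero blocks in $A_4$, and closes with the trivial bound $w_H(A_2)\leq b-1$. Your argument instead exploits the crosswise structure of the four lists: the $k$-th blocks of $A_1,A_2,A_3,A_4$ are precisely the four concatenations $L_1^{(k)}R_1^{(k)}$, $L_2^{(k)}R_2^{(k)}$, $L_1^{(k)}R_2^{(k)}$, $L_2^{(k)}R_1^{(k)}$ of two shared left parts and two shared right parts (the index bookkeeping checks out: for each $0\leq k\leq b-2$ the left parts have $b-1-k\geq 1$ entries, the right parts $k+1\geq 1$ entries, and the same index sequences recur verbatim across the four lists, so the identification is sound regardless of cyclic wrap-around). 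Since $w_H$ of a list counts nonzero blocks and a concatenated block is nonzero exactly when its left or right part is, the claim reduces to the one-line Boolean inequality $(a\vee p)+(c\vee q)-(a\vee q)-(c\vee p)\leq 1$ summed over the $b-1$ block indices, which your $(p,q)$ case split settles. What your approach buys is uniformity and auditability: there are no cases on zero patterns, the degenerate situations the paper treats separately (e.g. $z_l=\dots=z_{l-b+2}=0$) are absorbed automatically, and the bound is established termwise, which is slightly finer information than the aggregate statement. What the paper's approach buys is only that it needs no structural observation about the four lists, at the price of a long and delicate case split that is considerably harder to verify.
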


\begin{proof}
    If $z_l\neq 0$ then $w_H(A_1)=b-1$ and $w_H(A_3)=b-1.$ Hence $w_H(A_1)+w_H(A_2)-w_H(A_3)-w_H(A_4)=w_H(A_2)-w_H(A_4)\leq b-1.$\par
    Next consider the case $z_l=0$ and $z_{l-1}\neq 0$, then we have $w_H(A_1)\geq b-2,$ (since the tuples $(z_{l-b+2},z_{l-b+3},\dots,z_l,z_{l+1}),\dots, (z_{l-1},z_l,z_{l+1},\dots,z_{l+b-2})$ are all nonzero and are $b-2$ in number). Similarly, $w_H(A_3)\geq b-2,$ since $(z_{l-b+2},z_{l-b+3},\dots,z_l,z_1),\dots, (z_{l-1},z_l,z_1,\dots,z_{b-2})$ are all nonzero. If $w_H(A_1)\leq w_H(A_3)$ then $$w_H(A_1)+w_H(A_2)-w_H(A_3)-w_H(A_4)\leq w_H(A_2)-w_H(A_4)\leq b-1.$$ Else, if $w_H(A_1)>w_H(A_3)$ that is $w_H(A_1)=b-1$  implies that at least one of $z_{l+1},\dots, z_{l+b-1}$ is nonzero. Consequently, $w_H(A_4)\geq 1.$ Hence $$w_H(A_1)+w_H(A_2)-w_H(A_3)-w_H(A_4)\leq b-1+w_H(A_2)-(b-2)-1=w_H(A_2)\leq b-1.$$~\\
    Next, assume that $z_l=0,z_{l-1}=0$ and $z_{l-2}\neq 0.$ Then similar to the above arguments, we have $w_H(A_1)\geq b-3$ and $w_H(A_3)\geq b-3.$ Again, if $w_H(A_1)\leq w_H(A_3)$ then  $w_H(A_1)+w_H(A_2)-w_H(A_3)-w_H(A_4)\leq b-1.$ Otherwise, if $w_H(A_1)>w_H(A_3)$ then $w_H(A_1)=b-3+s$ for $s\in \{1,2\}.$
    \begin{itemize}
        \item If $s=1$ then $z_{l+b-1}\neq 0$ and $z_{l+b-2},z_{l+b-3},\dots,z_{l+1}$ are zero. Consequently, $w_H(A_4)\geq 1.$ Hence $w_H(A_1)+w_H(A_2)-w_H(A_3)-w_H(A_4)\leq b-2+w_H(A_2)-(b-3)-1=w_H(A_2)\leq b-1.$
        \item If $s=2$ then $z_{l+j}\neq 0$ for some $1\leq j\leq b-2.$ Consequently, $w_H(A_4)\geq 2.$ Hence   $w_H(A_1)+w_H(A_2)-w_H(A_3)-w_H(A_4)\leq b-1+w_H(A_2)-(b-3)-2=w_H(A_2)\leq b-1.$
    \end{itemize}
    
    Now, assume that $z_l=0,z_{l-1}=0,\dots, z_{l-(i-1)}=0$ and $z_{l-i}\neq 0$ for $0\leq i\leq b-2.$ Then similar to the above arguments, we have $w_H(A_1)\geq b-1-i,$ since the tuples $$(z_{l-b+2},\dots,z_{l-i},0,0,\dots,0,z_{l+1}),\dots,
        (z_{l-i},0,0,\dots,z_{l+1},\dots,z_{l+b-i-1})$$
   are all nonzero.  Similarly, $w_H(A_3)\geq b-1-i,$ since the tuples
   \begin{align*}
        (z_{l-b+2},\dots,z_{l-i},0,0,\dots,0,z_1),\dots,
        (z_{l-i},0,0,\dots,z_{1},\dots,z_{b-i-1})
    \end{align*}
    are nonzero.
    The remaining tuples in $A_1$ are as follows
    \begin{align*}
      A_1'=((0,\dots,0,z_{l+1},\dots,z_{l+b-i}),\dots, (0,0,z_{l+1},\dots,z_{l+b-2}),(0,z_{l+1},\dots,z_{l+b-1}) ).
    \end{align*}
     Consider the following tuples from $A_4$
      \begin{align*}
        A_4'=((z_{n-i+1},z_{n-(i-1)+1},\dots,z_n,z_{l+1},\dots,z_{l+b-i}),\dots, (z_{n-1},z_n,z_{l+1},\dots,z_{l+b-2}),(z_n,z_{l+1},\dots,z_{l+b-1})).
    \end{align*}
    Again, if $w_H(A_1)\leq w_H(A_3)$ then we are done. Let $w_H(A_1)>w_H(A_3)$ then $w_H(A_1)=b-1-i+s$ for some $1\leq s\leq i.$ Therefore, exactly $s$ of the tuples in $A_1'$ are nonzero. Let $z_{l+j}\neq 0$ for some $1\leq j\leq b-1.$ Note that if $1\leq j\leq b-i$ then all tuples in $A_1'$ are nonzero, that is $w_H(A_1)=b-1$ and all tuples of $A_4'$ are also nonzero, that is $w_H(A_4)\geq i.$ Hence $$w_H(A_1)+w_H(A_2)-w_H(A_3)-w_H(A_4)\leq b-1-(b-1-i)+w_H(A_2)-i=w_H(A_2)\leq  b-1.$$
    
  For $j>b-i$, we have $s<i$, and all tuples in $A_4'$ are not necessarily nonzero. But at least $s$ tuples in $A_1'$ are nonzero. Corresponding to those tuples, at least $s$ tuples of $A_4'$  $$(z_{n-s+1},\dots,z_n,z_{l+1},\dots,z_{l+b-s}),\dots,(z_n,\dots,z_{l+b-1})$$
   are nonzero. Consequently, $w_H(A_4)\geq s.$ Hence  $$w_H(A_1)+w_H(A_2)-w_H(A_3)-w_H(A_4)\leq s+w_H(A_2)-s=w_H(A_2)\leq  b-1.$$   
Next, if $z_l=0,z_{l-1}=0,\dots,z_{l-b+2}=0$ then 
 \begin{align*}
     A_1=&((0,0,\dots,0,z_{l+1}),(0,0,\dots,0,z_{l+1},z_{l+2}),\dots,(0,z_{l+1},\dots,z_{l+b-1})),\\
     A_3=&((0,0,\dots,z_1),(0,0,\dots,0,z_1,z_2),\dots,(0,z_1,\dots,z_{b-1})).
 \end{align*}
 Again, if $z_{l+1}\neq0$ then $w_H(A_1)=b-1$ and $w_H(A_4)=b-1.$ Hence $$w_H(A_1)+w_H(A_2)-w_H(A_3)-w_H(A_4)\leq w_H(A_2)-w_H(A_3)\leq  b-1.$$ 
 If $z_{l+1}=0$ and $z_{l+2}\neq 0$  then $w_H(A_1)=b-2$ and $w_H(A_4)\geq b-2.$ Hence $$w_H(A_1)+w_H(A_2)-w_H(A_3)-w_H(A_4)\leq w_H(A_2)-w_H(A_3)\leq  b-1.$$  

 Now, assume that $z_{l+1}=0,z_{l+2}=0,\dots,z_{l+i-1}=0$ and $z_{l+i}\neq 0$ for $1\leq i \leq b-1,$ then $w_H(A_1)=b-i$ and $w_H(A_4)\geq b-i.$ Hence $$w_H(A_1)+w_H(A_2)-w_H(A_3)-w_H(A_4)\leq w_H(A_2)-w_H(A_3)\leq  b-1.$$ \par
 If $z_l=z_{l-1}=\dots=z_{l-b+2}=0$ and $z_{l+1}=z_{l+2}=\dots=z_{l+b-2}=0$. Then $A_1=0$. Hence, $w_{H}(A_1)+w_{H}(A_2)-w_H(A_3)-w_H(A_4)\leq w_H(A_2) \leq b-1.$ This completes the proof.
\end{proof}
The next result gives a recurrence relation between $A_b(n,d,q)$ and $A_b(n-m,d-2r-b+1,q)$. Sometimes, it may be easy to calculate $A_b(n-m,d-2r-b+1,q)$ that consequently gives a bound on $A_b(n,d,q)$.
\begin{theorem}\label{abnqd}
    Let $n$ and $d$ be positive integers with $2\leq d\leq n.$ If there are two integers $r$ and $m$ satisfying  $b\leq d-2r\leq n-m$ and $0\leq r\leq m$, then
    $$A_b(n,d,q)\leq \frac{q^m}{B_b(m,r)}A_b(n-m,d-2r-b+1,q).$$ 
\end{theorem}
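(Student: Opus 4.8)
The plan is to combine an averaging (pigeonhole) argument over a window of $m$ consecutive coordinates with a puncturing step, using Lemma \ref{lemmab-1} to control the loss of $b$-symbol weight caused by deleting the window. The target distance $d-2r-b+1$ should appear as $d$ minus a contribution $2r$ from the window (where two codewords restricted to the window are forced to be close) minus a contribution $b-1$ from the boundary reconnection.

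First I would fix an optimal code $C$ with $|C| = A_b(n,d,q)$ and, after a cyclic rotation (which preserves $w_b$ and $d_b$), single out the contiguous window $W = \{1,\dots,m\}$. For each $v \in \mathbb{F}_q^m$ set $C_v = \{c \in C : c|_W \in B_b(m,r,v)\}$, where $c|_W$ denotes the restriction of $c$ to $W$, viewed as a cyclic length-$m$ vector. Since the cardinality of a $b$-ball of radius $r$ in $\mathbb{F}_q^m$ equals $B_b(m,r)$ independently of its center,
$$\sum_{v \in \mathbb{F}_q^m} |C_v| = \sum_{c \in C} \#\{v \in \mathbb{F}_q^m : d_b(c|_W, v) \le r\} = |C|\, B_b(m,r),$$
so by averaging over the $q^m$ possible centers there is a $v^*$ with $|C_{v^*}| \ge |C|\, B_b(m,r)/q^m$. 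Write $C' = C_{v^*}$.

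Next I would puncture: delete the coordinates in $W$ from every codeword of $C'$ to obtain $\tilde C \subseteq \mathbb{F}_q^{n-m}$. The heart of the argument is the claim that for distinct $c_1,c_2 \in C'$, writing $z = c_1-c_2$ and $\tilde z$ for its puncturing,
$$w_b(\tilde z) \ge w_b(z) - 2r - (b-1) \ge d - 2r - (b-1).$$
To prove this I would classify the cyclic $b$-symbol tuples of $z$ into those lying strictly inside $W$, those straddling the two boundaries of $W$, and those lying outside; the outside tuples are untouched by the deletion, whereas the inside and straddling tuples are destroyed and replaced by the $b-1$ new tuples crossing the single junction created between $z_n$ and $z_{m+1}$. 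The number of nonzero tuples strictly inside $W$ equals $w_b(z|_W)$ minus the window's own cyclic wrap-tuples, and since $c_1|_W,c_2|_W \in B_b(m,r,v^*)$ the triangle inequality gives $w_b(z|_W) \le 2r$. The remaining net contribution of the boundary tuples — destroyed groups $A_1,A_2$, created group $A_4$, corrected by the window wrap-around term $A_3$ — is exactly the quantity bounded by $b-1$ in Lemma \ref{lemmab-1} (applied with $l=m$); adding these two estimates yields the displayed inequality.

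Finally, since $b \le d-2r$ forces $d-2r-(b-1) \ge 1$, the inequality gives $w_b(\tilde z)\ge 1$, so the puncturing is injective on $C'$ and $|\tilde C| = |C'|$; moreover $\tilde C$ is a code of length $n-m$ with minimum $b$-symbol distance at least $d-2r-b+1$. Hence $|C|\, B_b(m,r)/q^m \le |C'| = |\tilde C| \le A_b(n-m,d-2r-b+1,q)$, and rearranging gives the asserted bound. I expect the main obstacle to be the bookkeeping in the weight-loss claim: precisely matching the destroyed inside and straddling tuples and the newly created junction tuples to the terms $A_1,A_2,A_3,A_4$ of Lemma \ref{lemmab-1}, and checking that the cyclic window metric underlying $B_b(m,r)$ is accounted for exactly by the auxiliary term $A_3$, with Lemma \ref{3b-1} covering the degenerate case $z|_W = 0$.
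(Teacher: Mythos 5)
Your proposal follows essentially the same route as the paper's proof: average over the $q^m$ ball centers to find a dense window (the paper does this averaging at the end, with the same characteristic-function computation), puncture those $m$ coordinates, use the triangle inequality in the cyclic $b$-symbol metric on the window to account for the loss of $2r$, and invoke Lemma \ref{lemmab-1} to bound the net boundary contribution (straddling tuples minus the two wrap-around groups) by $b-1$, with the degenerate "equal punctured parts" case ruled out exactly as the paper does via Lemma \ref{3b-1}. One caveat on your parenthetical "(applied with $l=m$)": Lemma \ref{lemmab-1} requires $l \geq b$, and the theorem's hypotheses do not guarantee $m \geq b$ (e.g.\ the paper's own example has $m=1$, $b=3$), so you should instead apply the lemma with the kept arc as the first block, i.e.\ rotate so the window sits at the end and take $l = n-m \geq d-2r \geq b$, which is precisely how the paper invokes it; the inequality you want is the same one by the evident symmetry of the four tuple groups under exchanging the two arcs.
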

\begin{proof}
 Let $C$ be $(n,M)$ code over $\mathbb{F}_q$ with $b$-symbol distance at least $d$ such that $M=A_b(n,d,q).$ 
 Define $\rho : \mathbb{F}_q^n \to \mathbb{F}_q^m$ and $\rho': \mathbb{F}_q^n \to \mathbb{F}_q^{n-m}$ as follows
 $$\rho(x_1,\dots,x_{n-m},x_{n-m+1},\dots,x_n)=(x_{n-m+1},\dots,x_n)$$
 and $$\rho'(x_1,\dots,x_{n-m},x_{n-m+1},\dots,x_n)=(x_1,\dots,x_{n-m}).$$
 Let 
     $C'=C \cap \rho^{-1}(B_b(m,r))
     =\{c=(c_1,\dots,c_n)\in C~|~ \rho(c)=(c_{n-m+1},\dots,c_n)\in B_b(m,r)\}.$ Take $C''=\rho'(C').$ Then $C''$ is a code of length $n-m$ over $\mathbb{F}_q.$
     
 We claim that the minimum $b$-symbol distance of $C''$ is greater than or equal to $d-2r-b+1$ that is $d_b(C'')\geq d-2r-b+1.$ In order to show this let $(x_1,\dots,x_{n-m})$ and $(y_1,\dots,y_{n-m})$ be distinct codewords in $C''.$ Then there are $x=(x_1,\dots,x_{n-m},x_{n-m+1},\dots,x_n)$ and $y=(y_1,\dots,y_{n-m},y_{n-m+1},\dots,y_n)$ in $C'$ such that $\rho'(x)=(x_1,\dots,x_{n-m})$ and $\rho'(y)=(y_1,\dots,y_{n-m}).$
Also, since $x,y\in C'$ therefore $\rho(x)=(x_{n-m+1},\dots,x_n),\rho(y)=(y_{n-m+1},\dots,y_n)\in B_b(m,r).$ By triangle inequality, we get $d_b(\rho(x),\rho(y))\leq d_b(\rho(x),0)+d_b(0,\rho(y))\leq r+r=2r.$ Consequently, we have 
\begin{equation}
    d-w_b((z_{n-m+1},z_{n-m+2},\dots,z_{n})) \geq d-2r\label{eq1}
\end{equation}
where $z_i=x_i-y_i.$ From now onwards, we write $z=x-y$ and $z_i=x_i-y_i$ for all $1\leq i\leq n.$ Also, since $C'\subseteq C$ therefore $d\leq d_b(C').$ Observe that 
\small\begin{align*}
    d\leq& d_b(x,y)=w_b(z)=w_H((z_1,z_2,\dots,z_b), \dots , (z_{n-m},z_{n-m+1},\dots,z_{n-m+b-1}), (z_{n-m+1},z_{n-m+2},\dots, z_{n-m+b})\\
    &\dots, (z_n,z_1,\dots,z_{b-1})) \\
    =& w_H((z_1,z_2,\dots,z_b), \dots, (z_{n-m-b+1}, z_{n-m-b+2},\dots, z_{n-m}),(z_{n-m-b+2},z_{n-m-b+2},\dots,z_{n-m},z_{n-m+1}),\\
    & \dots, (z_{n-m},z_{n-m+1},\dots, z_{n-m+b-1}), (z_{n-m+1},z_{n-m+2},\dots, z_{n-m+b}),\\
    &(z_{n-m+2},\dots, z_{n-m+b+1}), \dots, (z_{n-b+1},z_{n-b+2},\dots, z_n),(z_{n-b+2},\dots,z_n,z_1),\\
    &\dots, (z_{n-1},z_n,z_1,\dots,z_{b-2}), (z_n,z_1,\dots,z_{b-1}))\\
    =&w_H(S_1,S_3,S_2,S_4),
\end{align*}
where 
\begin{align*}
    S_1=& (z_1,z_2,\dots,z_b), \dots, (z_{n-m-b+1}, z_{n-m-b+2},\dots, z_{n-m}),\\
    S_2=&(z_{n-m+1},z_{n-m+2},\dots, z_{n-m+b}),
    (z_{n-m+2},\dots, z_{n-m+b+1}), \dots, (z_{n-b+1},z_{n-b+2},\dots, z_n),\\
    S_3=&(z_{n-m-b+2},z_{n-m-b+2},\dots,z_{n-m},z_{n-m+1}), \dots, (z_{n-m},z_{n-m+1},\dots, z_{n-m+b-1}), \text{ and }\\
    S_4=& (z_{n-b+2},\dots,z_n,z_1),   \dots, (z_{n-1},z_n,z_1,\dots,z_{b-2}), (z_n,z_1,\dots,z_{b-1}).\\
    \text{Let } S_5=& (z_{n-m-b+2},z_{n-m-b+3}, \dots,z_{n-m},z_1), (z_{n-m-b+3}, \dots,z_{n-m},z_1,z_2),\dots, (z_{n-m},z_1,z_2,\dots, z_{b-1}) \text{ and}\\
    S_6=& (z_{n-b+2},z_{n-b+3},\dots,z_n,z_{n-m+1}), \dots, (z_n,z_{n-m+1},z_{n-m+2},\dots, z_{n-m+b-1}).
\end{align*}
Note that $w_H(S_1,S_5)=w_b(z_1,z_2,\dots,z_{n-m})$ and $w_H(S_2,S_6)=w_b(z_{n-m+1},\dots,z_n).$
Thus, we have 
\begin{align*}
    d\leq &w_H(S_1,S_3,S_2,S_4)+w_H(S_5)+w_H(S_6)-w_H(S_5)-w_H(S_6)\\
    =& w_H(S_1)+w_H(S_3)+w_H(S_2)+w_H(S_4)+w_H(S_5)+w_H(S_6)-w_H(S_5)-w_H(S_6)\\
    =& w_H(S_1,S_5)+w_H(S_2,S_6)+w_H(S_3)+w_H(S_4)-w_H(S_5)-w_H(S_6)\\
    =& w_b(z_1,z_2,\dots,z_{n-m})+w_b(z_{n-m+1},\dots,z_n)+w_H(S_3)+w_H(S_4)-w_H(S_5)-w_H(S_6)\\
    d-&w_b(z_{n-m+1},\dots,z_n) \leq w_b(z_1,z_2,\dots,z_{n-m})+w_H(S_3)+w_H(S_4)-w_H(S_5)-w_H(S_6)
    \end{align*}
   \text{ Using Eq. \ref{eq1}, we get }$$d-2r\leq w_b(z_1,z_2,\dots,z_{n-m})+w_H(S_3)+w_H(S_4)-w_H(S_5)-w_H(S_6).$$ 

In Lemma \ref{lemmab-1}, take $l=n-m,A_1=S_3,A_2=S_4,A_3=S_5$ and $A_4=S_6.$ Then $$w_H(S_3)+w_H(S_4)-w_H(S_5)-w_H(S_6)\leq b-1.$$ We have $$d-2r\leq  w_b(z_1,z_2,\dots,z_{n-m})+b-1 $$ equivalently,
$$d-2r-b+1\leq w_b(z_1,z_2,\dots,z_{n-m}).$$ Hence, $d_b(C'')\geq d-2r-b+1.$ Thus $|C''|\leq A_b(n-m,d-2r-b+1,q)$.

Next, we claim that $|C'|=|C''|$ that is $\rho'$ is one-one map on $C'$ To see this, let $x,y\in C'=C \cap \rho^{-1}(B_b(m,r))$ be distinct codewords. This implies 
\begin{equation}\label{eq2}
0<d_b(\rho(x),\rho(y))\leq 2r, \text{ and }   d_b(x,y)\geq d_b(C)=d.
\end{equation}
Now, let  $\rho'(x)=\rho'(y)$ that is $(x_1,\dots,x_{n-m})=(y_1,\dots,y_{n-m})$ implies that $x_i-y_i=z_i=0$ for $i=1,\dots,n-m.$ Observe that
\begin{align*}
d_b(x,y)=& w_b(z)\\
=&w_H((z_1,\dots,z_b),\dots,(z_{n-m-b+1},\dots,z_{n-m}), (z_{n-m-b+2},\dots,z_{n-m},z_{n-m+1}),\\
&(z_{n-m-b+3},\dots,z_{n-m},z_{n-m+1},z_{n-m+2}),\dots, (z_{n-m},\dots, z_{n-m+b-1}),\\
&(z_{n-m+1},\dots,z_{n-m+b}), (z_{n-m+2},\dots,z_{n-m+b+1}),\dots, (z_n,z_1,\dots,z_{b-1})) \\
=&w_H((0,0,\dots,0,z_{n-m+1}),(0,0,\dots,0,z_{n-m+1},z_{n-m+2}),\dots, (0,z_{n-m+1},z_{n-m+2},\dots, z_{n-m+b-1})\\
&(z_{n-m+1},z_{n-m+2},\dots,z_{n-m+b}),\dots,(z_{n-b+1},\dots,z_n),(z_{n-b+2},\dots,z_n,0),\dots (z_{n-1},z_n,0,\dots,0),\\
&(z_n,0,\dots,0))  \\
=& w_H(A,B,C)=w_H(A)+w_H(B)+w_H(C),\\
\text{where }\ \ &\\
& A= (0,0,\dots,0,z_{n-m+1}),(0,0,\dots,0,z_{n-m+1},z_{n-m+2}),\dots, (0,z_{n-m+1},z_{n-m+2},\dots, z_{n-m+b-1}),\\
&B= (z_{n-m+1},z_{n-m+2},\dots,z_{n-m+b}),\dots,(z_{n-b+1},\dots,z_n), \text{ and}\\
&C=(z_{n-b+2},\dots,z_n,0), (z_{n-b+3},\dots,z_n,0,0),\dots,(z_n,0,\dots,0).\\
 \text{ Let } &D=(z_{n-b+2},\dots,z_n,z_{n-m+1}),(z_{n-b+3},\dots,z_n,z_{n-m+1},z_{n-m+2}),\dots,(z_n,z_{n-m+1},\dots,z_{n-m+b-1}).
 \end{align*}
 \text{ Then } $w_H(B)+w_H(D)=w_H(B,D)=w_b(z_{n-m+1},z_{n-m+2},\dots,z_n).$ Thus, we have
 \begin{align*}
   d_b(x,y)=& w_H(A,B,C)=w_H(A)+w_H(B)+w_H(C)+w_H(D)-w_H(D)\\
 d_b(x,y)=& w_H(B,D)+w_H(A)+w_H(C)-w_H(D)\\
 =&w_b(z_{n-m+1},z_{n-m+2},\dots,z_n)+w_H(A)+w_H(C)-w_H(D)\\
 =&d_b(\rho(x),\rho(y))+w_H(A)+w_H(C)-w_H(D).
 \end{align*}
In Lemma \ref{3b-1}, take  $l=n-m, B_1=A,B_2=C,B_3=D$ and  $z=(0,0,\dots,z_{n-m+1},\dots, z_n)$ then
$w_H(A)+w_H(C)-w_H(D)\leq b-1.$ So, we get 
$$d\leq  d_b(x,y)\leq  d_b(\rho(x),\rho(y))+(b-1)\leq 2r+b-1\ \text{(Using Eq. \ref{eq2})}$$ 
which is a contradiction since $d\geq 2r+b.$
Hence $\rho'(x)\neq \rho'(y)$ that is $\rho'$ is one-one consequently, $|C'|=|C''|.$

Our next claim is that $|C'|\geq \frac{MB_b(m,r)}{q^m}.$  To show; let  $\bm{c_1},\bm{c_2},\dots,\bm{c_M}$ be images of codewords of $C$ under the map $\rho.$ (need not be distinct) Then $|C'|$ is the number of $\bm{c}_i$'s that belong to $B_b(m,r)$ (since cardinality does not depend on the choice of the center). Choose $a\in \mathbb{F}_q^m$ as the center of a $b$-ball of radius $r$ in $\mathbb{F}_q^m$ such that the number of $\bm{c}_i$'s that belongs to $B_b(m,r,a)$ is as large as possible. Now, replace $B_b(m,r)$ by $B_b(m,r,a)$ that is $|C'|$ is the number of $\bm{c}_i$'s that belong to $B_b(m,r,a).$ For any $u\in \mathbb{F}_q^m,$ consider characteristic function \\ 
$$\chi_u(\bm{c_i})=\begin{Bmatrix}
    1 \text{ if } \bm{c_i}\in B_b(m,r,u),\\
    0 \text{ if } \bm{c_i}\notin B_b(m,r,u),
\end{Bmatrix}.$$
The number of $\bm{c}_i$'s that belong to the ball $B_b(m,r,u)$ can be expressed as $\sum_{i=1}^M \chi_u(\bm{c}_i).$ Finally, note that the number
 of $\bm{c}_i$'s that belong to $B_b(m,r,a)$ is larger than or equal to the average overall $b$-balls (this is because $a$ is chosen such that the
 number of $\bm{c_}i$'s belonging to $B_b(m,r,a)$ is as large as possible  when $a$ ranges over $\mathbb{F}_q^m$), that is,
 \begin{align*}
     |C'|\geq & \frac{1}{q^m}\sum_{u\in \mathbb{F}_q^m}\sum_{i=1}^M\chi_u(\bm{c_i})\\
     =&\frac{1}{q^m}\sum_{i=1}^M\sum_{u\in \mathbb{F}_q^m}\chi_u(\bm{c_i})\\
     =&\frac{MB_b(m,r)}{q^m},
 \end{align*}
  where the last equality holds because for any fixed $i$ with
 $1\leq i\leq M,
\sum_{u\in \mathbb{F}_q^m} \chi_u(\bm{c_i})=|B_b(m,r,\bm{c}_i)|=B_b(m,r).$
Therefore $\frac{MB_b(m,r)}{q^m}\leq |C'|=|C''|\leq A_b(n-m,d-2r-b+1,q)$ that is  $M=A_b(n,d,q)\leq \frac{q^m}{B_b(m,r)}A_b(n-m,d-2r-b+1,q)$
\end{proof}

\begin{corollary}\cite[Theorem 1]{chen2022}
  For $b=2,$ let $n$ and $d$ be positive integers with
 $2\leq d\leq n.$ If there are two integers $r$ and $m$ satisfying  $2\leq d-2r\leq n-m$ and $0\leq r\leq m,$
 then we have
$$ A_2(n,d,q) \leq \frac{q^m}{B_2(m,r)} A_2(n-m,d-2r-1,q).$$  
\end{corollary}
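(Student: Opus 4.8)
The plan is straightforward: this corollary is nothing more than the $b=2$ specialization of Theorem \ref{abnqd}, which has already been established in full generality, so the entire proof consists of substituting $b=2$ and checking that the hypotheses and conclusion match verbatim. First I would examine the hypotheses of Theorem \ref{abnqd}. The constraint $b \leq d-2r \leq n-m$ becomes $2 \leq d-2r \leq n-m$ upon setting $b=2$, while the constraint $0 \leq r \leq m$ is independent of $b$ and is therefore unchanged. Together these are exactly the hypotheses stated in the corollary, so the applicability of the theorem requires no further verification.

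Next I would substitute $b=2$ into the conclusion of Theorem \ref{abnqd}. The distance parameter $d-2r-b+1$ simplifies to $d-2r-2+1 = d-2r-1$, the ball cardinality $B_b(m,r)$ becomes $B_2(m,r)$, and each instance of $A_b(\cdot)$ becomes $A_2(\cdot)$. This produces precisely the asserted inequality
$$A_2(n,d,q) \leq \frac{q^m}{B_2(m,r)}\, A_2(n-m,\,d-2r-1,\,q).$$
Because the argument is a direct instantiation of an already-proven theorem, there is no genuine obstacle to overcome; the only items to confirm are the elementary arithmetic identity $d-2r-b+1 = d-2r-1$ at $b=2$ and the fact that the ranges of $r$ and $m$ coincide with those in the corollary, both of which are immediate. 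I would conclude by noting that this recovers \cite[Theorem 1]{chen2022}, thereby exhibiting Theorem \ref{abnqd} as a genuine generalization of Chen's recurrence to arbitrary $b \geq 2$.
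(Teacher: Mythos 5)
Your proposal is correct and matches the paper's intent exactly: the corollary is stated without separate proof precisely because it is the immediate $b=2$ instantiation of Theorem \ref{abnqd}, and your substitution check ($d-2r-b+1 = d-2r-1$ and the hypothesis $b \leq d-2r$ becoming $2 \leq d-2r$) is all that is needed.
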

\begin{example}
Assume notations are the same as in Theorem \ref{abnqd}. For $m=1$ and $r=0$, we have 
$$A_3(11,10,2)\leq 2A_3(10,8,2).$$
By computer search, we have $A_3(10,8,2)=23$. Thus  $$A_3(11,10,2)\leq 46.$$ 
By Sphere-packing bound \cite{Song2018}, we have 
$$A_3(11,10,2)\leq \frac{2^{11}}{B_3(11,4)}=\frac{2^{11}}{23}\cong 89.04$$ Hence Theorem \ref{abnqd} gives a better upper bound. By computer search, we have $A_3(11,10,2)=12$.
\end{example}
\section{Bounds on code size for $b$-symbol distance}\label{additivebound}
In this section, we present several bounds on code size for $b$-symbol distance.  First, we establish Johnson bound for even minimum $b$-symbol distance. The proof follows a similar approach to that of the Johnson bound (\cite[Theorem 2.3.8]{pless2003}).

\begin{theorem}\label{jboundeven}
Let $n$ and $d$ be positive integers such that $d=2t+2$ for some $t\in \mathbb{N}$. Then
$$A_b(n,d,q)\leq \frac{q^n}{B_b(n,t)+\frac{S_b(n,t+1)}{A_b(n,d,t+1,q)}}.$$
\end{theorem}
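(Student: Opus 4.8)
The plan is to mimic the classical even-distance Johnson bound (\cite[Theorem 2.3.8]{pless2003}) by double counting the incidences between $C$ and the spheres of radius $t+1$, using the disjointness of the radius-$t$ balls to bound how many center vectors can be ``active.'' Let $C$ be an optimal code with $|C|=M=A_b(n,d,q)$, where $d=2t+2$. I would first record that the $b$-symbol distance is translation invariant, i.e. $d_b(x,y)=w_b(x-y)$, which follows from the linearity of $\pi$ and is the same fact underlying the center-independence of $|B_b|$ and $|S_b|$ noted in Section \ref{pre}. Since $d_b(C)\geq d=2t+2>2t$, the balls $B_b(n,t,c)$ for $c\in C$ are pairwise disjoint, so their union contains exactly $M\cdot B_b(n,t)$ vectors.

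For the core step, for each $x\in\mathbb{F}_q^n$ I set $N(x)=|\{c\in C:d_b(x,c)=t+1\}|$ and claim $N(x)\leq A_b(n,d,t+1,q)$. Translating the codewords counted by $N(x)$ by $-x$ produces vectors $c-x$ of $b$-symbol weight exactly $w_b(c-x)=d_b(x,c)=t+1$, whose pairwise $b$-distances $d_b(c-x,c'-x)=d_b(c,c')$ remain at least $d$; hence they form a constant-$b$-weight code of weight $t+1$ and minimum $b$-distance $\geq d$, so $N(x)\leq A_b(n,d,t+1,q)$. I would also note that any $x$ with $N(x)>0$ lies outside every ball $B_b(n,t,c)$: if $d_b(x,c)=t+1$ while $d_b(x,c')\leq t$ for some $c'\neq c$, the triangle inequality forces $d_b(c,c')\leq 2t+1<d$, a contradiction. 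Consequently the set $\{x:N(x)>0\}$ is disjoint from $\bigcup_{c\in C}B_b(n,t,c)$, so it has cardinality at most $q^n-M\cdot B_b(n,t)$.

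Finally I would count the pairs $(x,c)$ with $d_b(x,c)=t+1$ two ways. Summing over $c\in C$ gives $M\cdot S_b(n,t+1)$, while summing over $x$ gives $\sum_x N(x)\leq \bigl(q^n-M\cdot B_b(n,t)\bigr)A_b(n,d,t+1,q)$. Combining the two counts yields $M\,S_b(n,t+1)+M\,B_b(n,t)A_b(n,d,t+1,q)\leq q^n A_b(n,d,t+1,q)$, and solving this linear inequality for $M$ produces the stated bound $A_b(n,d,q)\leq q^n/\bigl(B_b(n,t)+S_b(n,t+1)/A_b(n,d,t+1,q)\bigr)$.

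I expect the main obstacle to be the constant-weight reduction for $N(x)$: one must verify carefully that translating by $-x$ delivers an \emph{admissible} input to $A_b(n,d,t+1,q)$, namely that every translated codeword has $b$-symbol weight exactly $t+1$ and that the minimum $b$-distance is preserved. Both facts hinge on translation invariance of $d_b$, so making that invariance explicit (rather than only invoking center-independence of balls and spheres) is the step I would be most careful about; the remaining bookkeeping, including the disjointness argument and the final algebraic manipulation, is routine.
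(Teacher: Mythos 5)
Your proof is correct and follows essentially the same route as the paper's: the paper's set $\Im=\{y\in\mathbb{F}_q^n : d_b(C,y)=t+1\}$ and incidence set $\aleph$ are exactly your active set $\{x : N(x)>0\}$ and your collection of pairs $(x,c)$ with $d_b(x,c)=t+1$, and both arguments combine the same three ingredients—the constant-weight translation bound $N(x)\leq A_b(n,d,t+1,q)$, the triangle-inequality argument keeping the active points outside the radius-$t$ balls, and the double count of incidences. The only difference is bookkeeping (the paper bounds $|\Im|$ via the disjoint union $\bigcup_{c\in C}B_b(n,t,c)\cup\Im\subseteq\mathbb{F}_q^n$, while you bound the active set directly), and the final inequality is identical.
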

\begin{proof}
Let $C$ be a code of size $M=A_b(n,d,q)$ with $d_b(C) \geq d$. For $y\in \mathbb{F}_q^n$, define $$d_b(C,y)=\min_{c\in C, \   c\neq y}d_b(c,y).$$
Denote $\Im=\{y\in \mathbb{F}_q^n|\ d_b(C,y)=t+1\}$. Since $d=2t+2$, $\bigcup_{c\in C} B_b(n,t,c)\cup \Im$ is a disjoint union. Thus, we have 
\begin{equation}\label{j1}
    MB_b(n,t)+|\Im|\leq q^n.
\end{equation}
Let $\aleph =\{(c,y)\in C\times \Im| \ d_b(c,y)=t+1\}$. For $y\in \Im$, let $C_y=\{c-y|\ c\in C, d_b(c,y)=t+1\}$. Observe that $wt_b(c-y)=t+1$ and $d_b(c-y,c'-y)=d_b(c,c')\geq d$ for $c,c'\in C$ with $d_b(c,y)=t+1=d_b(c'-y)$. Therefore, $C_y$ is a constant  $b$-symbol weight code of length $n$ with constant $b$-symbol weight $t+1$ and $d_b(C_y)\geq d$. Hence, we have
$$|C_y|\leq A_b(n,d,t+1,q).$$
Consequently,
\begin{equation}\label{j2}
   |\aleph|\leq |\Im|\cdot A_b(n,d,t+1,q). 
\end{equation}
To obtain a lower bound on $|\aleph|$, let $\aleph_c=\{y\in \Im|\ (c,y)\in \aleph\}$ for $c\in C$. We have 
\begin{equation}\label{j3}
|\aleph|=\sum_{c\in C}|\aleph_c|.
\end{equation}
As the cardinality of $b$-symbol sphere is independent of center, for a fix $c\in C$, we have 
\begin{equation*}
    S_b(n,t+1)=|S_b(n,t+1,c)|=|\{y\in \mathbb{F}_q^n|\ d_b(c,y)=t+1\}|.
\end{equation*}
For an arbitrary $c'\in C$ distinct from $c$ and $y\in S_b(n,t+1,c)$ ($d_b(c,y)=t+1$ implies $d_b(C,y) \leq t+1$), by triangle inequality
$$d\leq d_b(c',c)\leq d_b(c',y)+d_b(c,y)=d_b(c',y)+t+1.$$
That is, $d_b(c',y)\geq d-t-1=t+1$ for any $c'\in C$ distinct from $c$. So, $d_b(C,y)=t+1$. Hence $y \in \Im$. Thus, if $y\in S_b(n,t+1,c)$ then  $y\in \aleph_c$.  Therefore, \begin{equation}\label{j4}
    |\aleph_c|=S_b(n,t+1).
\end{equation}
Combining Eqs. \ref{j1}, \ref{j2}, \ref{j3} and \ref{j4}, we have
$$MB_b(n,t)+\frac{MS_b(n,t+1)}{A_b(n,d,t+1,q)}\leq q^n.$$
Equivalently, 
$$A_b(n,d,q)\leq \frac{q^n}{B_b(n,t)+\frac{S_b(n,t+1)}{A_b(n,d,t+1,q)}}.$$
This completes the proof.
\end{proof}
\begin{remark}
     A similar bound for $b=2$ has been obtained for binary and $q$-ary codes in \cite{Elishco2020} and \cite{chen2022}, respectively.
\end{remark}

\begin{remark}
    Corollary $4.5$ below gives an upper bound for the value of $A_b(n,d,t+1,q)$ and thus gives a closed-form formula in Theorem $4.1$.
\end{remark}

Now, we explore that any code over $\mathbb{F}_q$ in $b$-symbol distance can be viewed as a code over $\mathbb{F}_{q^b}$ in Hamming distance such that $b$-symbol distance and Hamming distance of the codes are same. Using this correspondence, we derive several bounds on the code size for $b$-symbol distance. 

Let $\mathbb{F}_{q^b}$ be an extension field of $\mathbb{F}_q$ of degree $b$, then  there exists an element $\gamma$ in $\mathbb{F}_{q^b}$ such that $\mathbb{F}_{q^b}=\mathbb{F}_q(\gamma)$. Consequently, there exists a monic irreducible polynomial over $\mathbb{F}_{q}$ of degree $b$ such that $f(\gamma)=0$. Therefore, any $x\in \mathbb{F}_{q^b}$, can be uniquely written as $a_1+a_2\gamma+\dots+a_{b}\gamma^{b-1},$ where $a_i\in \mathbb{F}_q,$ for $1\leq i\leq b.$ For $x=(x_1,x_2,\dots,x_n)\in \mathbb{F}_q^n,$ define $\delta(x):=x'=(x_1+x_2\gamma+\dots+x_{b}\gamma^{b-1}, \ x_2+x_3\gamma+\dots+x_{b+1}\gamma^{b-1},\  \dots,\  x_n+x_1\gamma+\dots+x_{b-1}\gamma^{b-1}).$ Clearly, $x'\in \mathbb{F}_{q^b}^n.$
\begin{proposition}\label{btoham}
For any $(n,M)$ code over $\mathbb{F}_q$ with minimum $b$-symbol distance $d$, 
$$C':=\{\delta(c)~:~c \in C\}$$ is an $(n,M)$ code over $\mathbb{F}_{q^b}$ with minimum Hamming distance $d$. If $c \in C$ has $b$-symbol weight $w$ then Hamming weight of $\delta(c) \in C'$ is $w$.
\end{proposition}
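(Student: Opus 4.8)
The plan is to reduce the entire statement to a single weight-preservation identity, namely $w_H(\delta(z)) = w_b(z)$ for every $z \in \mathbb{F}_q^n$, and then to deduce injectivity, distance preservation, and weight preservation as formal consequences.

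First I would record that $\delta$ is $\mathbb{F}_q$-linear: each coordinate $(\delta(x))_i = x_i + x_{i+1}\gamma + \dots + x_{i+b-1}\gamma^{b-1}$ (indices taken $\bmod\, n$) is an $\mathbb{F}_q$-linear function of the entries of $x$, so $\delta(x) - \delta(y) = \delta(x-y)$. This is what lets me pass freely between distances and weights.

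The heart of the argument is the identity $w_H(\delta(z)) = w_b(z)$. Writing $z=(z_1,\dots,z_n)$, the $i$-th coordinate of $\delta(z)$ is $\sum_{j=0}^{b-1} z_{i+j}\gamma^{j}$. Since $\{1,\gamma,\dots,\gamma^{b-1}\}$ is an $\mathbb{F}_q$-basis of $\mathbb{F}_{q^b}$ (because $\gamma$ is a root of a monic irreducible polynomial of degree $b$, so these powers are linearly independent over $\mathbb{F}_q$), this element vanishes if and only if $z_i = z_{i+1} = \dots = z_{i+b-1} = 0$, i.e.\ exactly when the $b$-window $(z_i,\dots,z_{i+b-1})$ is zero. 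Counting the nonzero coordinates of $\delta(z)$ therefore counts precisely the windows that are nonzero, which by the definition of $b$-symbol weight is $w_b(z)$; hence $w_H(\delta(z)) = w_b(z)$. The only point requiring care is the cyclic indexing modulo $n$, which aligns the $n$ windows in the definition of $w_b$ with the $n$ coordinates of $\delta(z)$.

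With this identity in hand the remaining claims follow formally. Taking $z = x-y$ and using linearity gives $d_H(\delta(x),\delta(y)) = w_H(\delta(x-y)) = w_b(x-y) = d_b(x,y)$, so $\delta$ preserves distance exactly; in particular the minimum Hamming distance of $C'$ equals $\min_{c \neq c'} d_b(c,c') = d$. Since $w_b(z)=0$ forces $z=0$ (the window starting at position $i$ already contains $z_i$), $\delta$ is injective, whence $|C'| = M$ and distinct codewords of $C$ give distinct codewords of $C'$. Finally, setting $y=0$, so that $\delta(0)=0$, yields $w_H(\delta(c)) = d_b(c,0) = w_b(c) = w$, which is the weight statement. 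I do not anticipate a genuine obstacle here: the proof is essentially a routine verification, and its only substantive step is the linear-independence argument that converts vanishing of a window into vanishing of the corresponding coordinate.
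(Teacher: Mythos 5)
Your proof is correct and takes essentially the same route as the paper: both rest entirely on the identity $w_b(c)=w_H(\delta(c))$, which the paper merely asserts and you justify properly via the linear independence of $1,\gamma,\dots,\gamma^{b-1}$ over $\mathbb{F}_q$. In fact your write-up is more complete than the paper's, which leaves the $\mathbb{F}_q$-linearity of $\delta$, its injectivity (needed to conclude $|C'|=M$), and the window-vanishing argument implicit.
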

\begin{proof}
    For $x=(x_1,x_2,\dots,x_n)\in \mathbb{F}_q^n,$ define $\delta(x):=x'=(x_1+x_2\gamma+\dots+x_{b}\gamma^{b-1}, \ x_2+x_3\gamma+\dots+x_{b+1}\gamma^{b-1},\  \dots,\  x_n+x_1\gamma+\dots+x_{b-1}\gamma^{b-1}).$ Clearly, $x'\in \mathbb{F}_{q^b}^n.$ Then $C'=\{\delta(c)| c\in C\}$ is an $(n,M)$ code over $\mathbb{F}_{q^b}.$ Also, $wt_b(c)=wt_H(\delta(c)).$ Hence, the minimum Hamming distance of $C'$ is $d.$ 
\end{proof}

\begin{corollary}[Restricted Johnson Bound for $b$-symbol distance]\label{rjbound}
 Let $n$, $d$ and $w$ be positive integers. Then  $$A_b(n,d,w,q)\leq \left \lfloor \frac{d}{d-2w+\frac{q^bw^2}{(q^b-1)n}} \right \rfloor$$ 
   whenever the denominator is positive.
\end{corollary}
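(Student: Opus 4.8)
The plan is to reduce the claimed bound to the classical restricted Johnson bound in the Hamming metric, and then invoke the correspondence established in Proposition \ref{btoham}. The key observation is that Proposition \ref{btoham} gives a weight- and distance-preserving bijection $\delta$ from a $q$-ary code $C$ of length $n$ with constant $b$-symbol weight $w$ and minimum $b$-symbol distance $d$ onto a code $C' = \delta(C)$ over $\mathbb{F}_{q^b}$ of length $n$ with constant Hamming weight $w$ and minimum Hamming distance $d$. Because $\delta$ is a bijection preserving both the relevant weight and the relevant distance, we get $|C| = |C'|$, and hence $A_b(n,d,w,q) \leq A_H(n,d,w,q^b)$, the latter being the largest constant-(Hamming-)weight code of length $n$, weight $w$, minimum distance $d$ over an alphabet of size $q^b$.

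First I would state and apply the classical restricted Johnson bound for constant-weight codes in the Hamming metric over an alphabet of size $Q$. In the $Q$-ary case this bound reads
$$A_H(n,d,w,Q) \leq \left\lfloor \frac{d}{d - 2w + \frac{Q w^2}{(Q-1)n}} \right\rfloor$$
whenever the denominator is positive (this is the standard generalization of \cite[Theorem 2.3.8]{pless2003} to the $Q$-ary alphabet, matching the source cited in Table \ref{tab1} for the $b=2$ restricted Johnson bound). Setting $Q = q^b$ yields exactly the right-hand side in the statement. Combining with the inequality $A_b(n,d,w,q) \leq A_H(n,d,w,q^b)$ from the previous paragraph completes the proof.

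The derivation of the underlying $Q$-ary restricted Johnson bound itself proceeds by the usual averaging argument, should it need to be reproduced rather than cited: take an optimal constant-weight code, count ordered pairs of codewords weighted by the size of the agreement set on their support coordinates, and bound the total inner-product sum in two ways. One way uses the minimum distance $d$ together with the constant weight $w$ to lower-bound pairwise agreements; the other uses a Cauchy-Schwarz or convexity estimate on the column sums of the codeword matrix, where the factor $\frac{Q}{Q-1}$ arises from the fact that with $Q$ symbols the coincidences among nonzero entries in a fixed column are minimized when the nonzero values are spread as evenly as possible across the $Q-1$ nonzero symbols. Assembling these two estimates and solving the resulting inequality for $M = |C|$ gives the stated floor.

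The main obstacle is not the reduction, which is immediate from Proposition \ref{btoham}, but rather ensuring that the $Q$-ary restricted Johnson bound is invoked in precisely the correct form, with the alphabet-size factor $\frac{q^b}{q^b-1}$ appearing as written. I would take care to verify that the constant-weight codes produced under $\delta$ are genuine constant-weight codes over the full alphabet $\mathbb{F}_{q^b}$ (so that the correct value $Q = q^b$ enters the bound) and that the positivity hypothesis on the denominator is inherited correctly, since the inequality is only meaningful when $d - 2w + \frac{q^b w^2}{(q^b-1)n} > 0$. With that alphabet bookkeeping handled, the result follows directly.
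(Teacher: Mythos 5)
Your proposal is correct and follows essentially the same route as the paper: both reduce via Proposition \ref{btoham} to a constant Hamming-weight code over $\mathbb{F}_{q^b}$ of length $n$, weight $w$, and minimum distance at least $d$, and then invoke the classical $Q$-ary restricted Johnson bound with $Q=q^b$ (the paper cites \cite[Theorem 2.3.4]{pless2003} directly rather than re-deriving it). Your extra sketch of the averaging argument behind the $Q$-ary bound and the alphabet-size bookkeeping are fine but not needed.
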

\begin{proof}
 Let $C$ be a code over $\mathbb{F}_q$ such that $|C|=A_b(n,d,w,q).$ From Proposition \ref{btoham}, $C'$ is a constant Hamming weight code over $\mathbb{F}_{q^b}$ of length $n$ and minimum Hamming distance at least $d$ whose all codewords have Hamming weight $w.$ Using  Restricted Johnson Bound for Hamming distance \cite[Theorem 2.3.4]{pless2003}, we have
$$A_b(n,d,w,q)=|C|=|C'|\leq \left \lfloor \frac{d}{d-2w+\frac{q^bw^2}{(q^b-1)n}} \right \rfloor.$$
 whenever $\left (d-2w+\frac{q^bw^2}{(q^b-1)n}\right )$ is positive.
\end{proof}

\begin{corollary}[Plotkin bound for $b$-symbol distance]\label{plotb}
    Let $q$ be a prime power, and $r,n$ be positive integers satisfying $2\leq d\leq n$ and $rn<d,$ where $r=1-q^{-b}.$ Then 
    $$A_b(n,d,q)\leq \left \lfloor \frac{d}{d-rn}\right \rfloor.$$
\end{corollary}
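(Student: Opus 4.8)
The plan is to reduce the Plotkin bound for the $b$-symbol distance to the classical Plotkin bound for the Hamming metric, exactly as was done for the Restricted Johnson bound in Corollary \ref{rjbound}. The key observation is Proposition \ref{btoham}: given any $(n,M)$ code $C$ over $\mathbb{F}_q$ with minimum $b$-symbol distance at least $d$, the map $\delta$ produces a code $C' = \{\delta(c) : c \in C\}$ over the extension field $\mathbb{F}_{q^b}$ of the same length $n$ and size $M$, whose minimum \emph{Hamming} distance is also at least $d$. Thus the problem of bounding $A_b(n,d,q)$ is transferred to bounding the size of a code over an alphabet of size $Q := q^b$ in the Hamming metric.

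First I would take $C$ to be an optimal $b$-symbol code, so that $|C| = A_b(n,d,q)$ and $d_b(C) \geq d$, and pass to $C' \subseteq \mathbb{F}_{q^b}^n$ via $\delta$. By Proposition \ref{btoham}, $C'$ is a $Q$-ary code of length $n$, size $M = A_b(n,d,q)$, and minimum Hamming distance $d_H(C') \geq d$. The hypothesis $rn < d$ with $r = 1 - q^{-b} = 1 - Q^{-1}$ is precisely the condition $(1-Q^{-1})n < d$ under which the classical $q$-ary Plotkin bound applies: it guarantees that the average pairwise Hamming distance over all $\binom{M}{2}$ pairs, which one bounds above by $rn$ per coordinate via a standard counting argument over each of the $n$ coordinates, stays below the forced minimum $d$, forcing $M$ to be small. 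Applying the classical Plotkin bound over the alphabet $\mathbb{F}_{q^b}$ of size $Q = q^b$ then yields
$$A_b(n,d,q) = |C| = |C'| \leq \left\lfloor \frac{d}{d - rn} \right\rfloor,$$
where $r = 1 - q^{-b}$ is exactly the per-coordinate average-distance factor for the size-$q^b$ alphabet. This completes the argument.

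The step I expect to require the most care is making sure that the role of the alphabet size is handled correctly: the classical Plotkin bound over an alphabet of size $Q$ uses the factor $1 - Q^{-1}$, and here $Q = q^b$, so one must confirm that the statement's $r = 1 - q^{-b}$ matches the extension-field alphabet rather than the base field. Once the translation through $\delta$ is set up, the bound follows immediately by invoking the Hamming-metric Plotkin bound over $\mathbb{F}_{q^b}$ (e.g.\ \cite[Theorem 2.2.1]{pless2003}); no independent combinatorial estimate is needed, since all the geometric content is already encoded in Proposition \ref{btoham}. This makes the proof essentially a one-line reduction, with the only genuine bookkeeping being the verification that the hypothesis $rn < d$ is the correct positivity condition $d - rn > 0$ required for the denominator in the final fraction to be positive.
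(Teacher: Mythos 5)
Your proof is correct and takes essentially the same approach as the paper: both pass from an optimal $b$-symbol code to a $q^b$-ary code via Proposition \ref{btoham} and then apply the classical Plotkin bound for the Hamming distance \cite[Theorem 2.2.1]{pless2003}, with $r = 1-q^{-b}$ being the per-coordinate factor for the alphabet of size $q^b$. Your additional remarks verifying that the alphabet size in the hypothesis matches the extension field are a sensible sanity check but add nothing beyond the paper's one-line reduction.
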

\begin{proof}
 Let $C$ be a code such that $A_b(n,d,q)=|C|.$ By Proposition \ref{btoham}, $C'$ is a code of length $n$ over $\mathbb{F}_{q^b}$ with minimum Hamming distance at least $d.$ By Plotkin bound for Hamming distance \cite[Theorem 2.2.1]{pless2003}, we have 
 $$A_b(n,d,q)=|C|=|C'|\leq \left \lfloor \frac{d}{d-rn}\right \rfloor.$$
\end{proof}
\begin{remark}
    In \cite{yang2016}, Yang et al. derived the Plotkin-like bound for $b$-symbol distance. We obtained the same bound with a very easy approach in the above corollary. 
\end{remark}
We give an example of a code that attains the Plotkin bound for the $b$-symbol distance.
\begin{example}
For $n=7,b=3, d_b=7$ and $q=2,$ we have $$A_3(7,7,2)\leq \left \lfloor \frac{d}{d-rn}\right \rfloor=\left \lfloor \frac{7}{7-\frac{7}{8}\cdot7}\right \rfloor=8.$$ Consider a code $C$ with length $7$ over $\mathbb{F}_2$ given by $$C:=\{1 1 1 1 1 1 1,
    0 1 0 0 0 1 1,
    1 0 1 0 0 0 1,
    1 1 0 1 0 0 0,
    0 1 1 0 1 0 0,
    0 0 1 1 0 1 0,
    1 0 0 0 1 1 0,
    0 0 0 1 1 0 1\}.$$ It is easy to see $|C|=8$ and $d_b(C)=7$. Thus, the bound in Corollary $4.6$ is attained.  
\end{example}

\begin{corollary}[Sphere packing type  bound for $b$-symbol distance]\label{spb} Let $n$ and $d$ be positive integer with $1\leq d\leq n$. Let $t=\lfloor \frac{d-1}{2}\rfloor$. Then
    $A_b(n,d,q)\leq \frac{q^{bn}}{\sum_{i=0}^t {{n}\choose {i}}(q^b-1)^i}. $
\end{corollary}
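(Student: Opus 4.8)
The goal is to establish the Sphere-packing type bound
$$A_b(n,d,q)\leq \frac{q^{bn}}{\sum_{i=0}^t \binom{n}{i}(q^b-1)^i},\quad t=\left\lfloor \frac{d-1}{2}\right\rfloor,$$
and the natural strategy is to reduce the $b$-symbol problem to an ordinary Hamming-metric problem via the isometry-type correspondence established in Proposition \ref{btoham}. The plan is as follows: start with a code $C$ over $\mathbb{F}_q$ of length $n$ with $|C|=A_b(n,d,q)$ and $d_b(C)\geq d$. Apply the map $\delta$ to obtain $C'=\delta(C)$, which by Proposition \ref{btoham} is an $(n,M)$ code over the extension field $\mathbb{F}_{q^b}$ whose minimum Hamming distance equals $d$, with $|C'|=|C|$. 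This transfers the entire question into the familiar Hamming setting over an alphabet of size $q^b$.

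The next step is to invoke the classical Hamming/Sphere-packing bound for codes over a $q^b$-ary alphabet. With $t=\lfloor (d-1)/2\rfloor$, the $t$-balls around distinct codewords of $C'$ are pairwise disjoint (since $d_H(C')=d\geq 2t+1$), and each Hamming ball of radius $t$ in $\mathbb{F}_{q^b}^n$ has cardinality $\sum_{i=0}^t \binom{n}{i}(q^b-1)^i$. Counting that the total ball volume cannot exceed the size $q^{bn}$ of the ambient space $\mathbb{F}_{q^b}^n$ gives
$$|C'|\sum_{i=0}^t \binom{n}{i}(q^b-1)^i\leq q^{bn},$$
and rearranging yields exactly the claimed inequality once we substitute $|C'|=|C|=A_b(n,d,q)$. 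Concretely, this is the standard Sphere-packing bound (e.g. \cite[Theorem 1.12.1]{pless2003}) applied to the $q^b$-ary code $C'$.

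There is essentially no genuine obstacle here, since Proposition \ref{btoham} does all the heavy lifting by converting the $b$-symbol metric into the Hamming metric over $\mathbb{F}_{q^b}$; the only point that deserves care is confirming that the correspondence genuinely preserves the minimum distance (not merely an inequality), so that the Hamming bound applies with the exact value $d$ and the balls of radius $t=\lfloor(d-1)/2\rfloor$ are disjoint. That fact is already guaranteed by the proposition, which asserts $d_H(C')=d$. Thus the proof is simply the composition of the transfer principle and the classical bound, and I would write it in three short sentences mirroring the proofs of Corollary \ref{rjbound} and Corollary \ref{plotb}.
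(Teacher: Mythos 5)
Your proposal is correct and is exactly the paper's approach: the paper's proof is a one-line appeal to Proposition \ref{btoham} combined with the classical Sphere-packing bound \cite[Theorem 1.12.1]{pless2003} for the $q^b$-ary code $C'=\delta(C)$, which is precisely the argument you spell out. Your additional remark that the correspondence preserves the minimum distance exactly (so the radius-$t$ balls are genuinely disjoint) is the right point to check, and it is indeed guaranteed by the proposition.
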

\begin{proof}
    The proof will follow using Proposition \ref{btoham} and Sphere packing bound \cite[Theorem 1.12.1]{pless2003} for the Hamming distance.
\end{proof}
\begin{corollary}[Gilbert-Varshamov type bound for $b$-symbol distance]\label{Gilbert}
   Let $n$ and $d$ be positive integer with $1\leq d\leq n$. Then $A_b(n,d,q)\geq \frac{q^{bn}}{\sum_{i=0}^{d-1}{{n}\choose {i}}(q^b-1)^i}$.
\end{corollary}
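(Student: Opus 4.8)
The plan is to reduce the $b$-symbol Gilbert–Varshamov bound to the classical Hamming-metric Gilbert–Varshamov bound via the correspondence established in Proposition \ref{btoham}. The key observation is that $\delta$ maps a $q$-ary code of length $n$ with minimum $b$-symbol distance $d$ bijectively onto a code of the same length over $\mathbb{F}_{q^b}$ with the same minimum Hamming distance $d$, and this correspondence preserves code size. Since it is a weight- and distance-preserving bijection, it converts existence questions about $b$-symbol codes into existence questions about Hamming-metric codes over the larger alphabet $\mathbb{F}_{q^b}$.

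First I would invoke the classical Gilbert–Varshamov bound (stated as, e.g., \cite[Theorem 2.8.1 or similar]{pless2003}), which guarantees that over an alphabet of size $Q$ there exists a code of length $n$ and minimum Hamming distance at least $d$ whose size is at least $Q^{n}/\sum_{i=0}^{d-1}\binom{n}{i}(Q-1)^{i}$. Applying this with $Q=q^b$ produces a code $C'\subseteq \mathbb{F}_{q^b}^n$ with minimum Hamming distance at least $d$ and $|C'|\geq q^{bn}/\sum_{i=0}^{d-1}\binom{n}{i}(q^b-1)^i$. The next step is to pull this code back through the bijection $\delta$: since $\delta$ is a bijection from $\mathbb{F}_q^n$ onto $\mathbb{F}_{q^b}^n$ (each coordinate of $\mathbb{F}_{q^b}$ is written uniquely in the basis $1,\gamma,\dots,\gamma^{b-1}$), the preimage $C=\delta^{-1}(C')$ is a $q$-ary code of length $n$ with $|C|=|C'|$, and by Proposition \ref{btoham} its minimum $b$-symbol distance equals the minimum Hamming distance of $C'$, hence is at least $d$. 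Therefore $A_b(n,d,q)\geq |C|\geq q^{bn}/\sum_{i=0}^{d-1}\binom{n}{i}(q^b-1)^i$, which is the desired inequality.

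The only subtlety, and what I would flag as the main point requiring care, is that Proposition \ref{btoham} as stated goes from a $q$-ary $b$-symbol code to an $\mathbb{F}_{q^b}$-ary Hamming code, whereas here I need the reverse direction. This is harmless because $\delta$ is a bijection on the ambient spaces: every element of $\mathbb{F}_{q^b}^n$ arises as $\delta(x)$ for a unique $x\in\mathbb{F}_q^n$, so defining $C:=\delta^{-1}(C')$ is legitimate, and the distance-preservation property $d_b(x,y)=d_H(\delta(x),\delta(y))$ runs in both directions. I would make this explicit to avoid any suggestion that the correspondence only works one way. Beyond that observation, the argument is an immediate application of the classical bound over the extension field, so no genuine obstacle remains.
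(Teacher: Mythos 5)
Your reduction is exactly the one the paper intends (its proof of Corollary \ref{Gilbert} is the one-line appeal to Proposition \ref{btoham} plus the classical Gilbert bound), but the step you yourself flagged as ``the only subtlety'' is precisely where the argument breaks, and your dismissal of it is wrong. The map $\delta$ is \emph{not} a bijection from $\mathbb{F}_q^n$ onto $\mathbb{F}_{q^b}^n$: the domain has $q^n$ elements while the codomain has $q^{bn}$. What is true is that each single coordinate map $\mathbb{F}_q^b\to\mathbb{F}_{q^b}$, $(a_1,\dots,a_b)\mapsto a_1+a_2\gamma+\cdots+a_b\gamma^{b-1}$, is bijective; but $\delta$ feeds it the $n$ \emph{overlapping} windows $(x_i,\dots,x_{i+b-1})$ of one vector $x$, so its image is only the set of ``consistent'' vectors in $\mathbb{F}_{q^b}^n$ whose coordinate expansions agree on overlaps --- a subset of size exactly $q^n$. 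Consequently, a code $C'$ produced by the classical Gilbert--Varshamov existence argument over the alphabet $\mathbb{F}_{q^b}$ need not contain even a single vector lying in the image of $\delta$, so ``$C:=\delta^{-1}(C')$'' is in general empty or far smaller than $C'$, and the pullback step, hence the whole proof, collapses.

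This is not a defect that more care could repair: Proposition \ref{btoham} is intrinsically one-directional. It gives $A_b(n,d,q)\leq A_H(n,d,q^b)$, which is why the paper can legitimately invoke it for the sphere-packing, Plotkin, restricted Johnson, and Elias \emph{upper} bounds, but it provides no lower bound on $A_b(n,d,q)$ in terms of $A_H(n,d,q^b)$. Indeed, the inequality claimed in the corollary cannot hold for all $1\leq d\leq n$: at $d=1$ the right-hand side equals $q^{bn}$, while trivially $A_b(n,1,q)\leq q^n$. The paper's own terse proof suffers from the same gap; the correct general-$b$ Gilbert--Varshamov bound comes from the maximal-code covering argument carried out inside $\mathbb{F}_q^n$ itself, which yields $A_b(n,d,q)\geq q^n/B_b(n,d-1)$ as in \cite{Song2018}, and running that covering argument inside the image $\delta(\mathbb{F}_q^n)$ only recovers this same bound, not the stronger claim. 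So your instinct that the direction of Proposition \ref{btoham} was the crux was exactly right; the error was concluding that surjectivity of $\delta$ rescues it.
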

\begin{proof}
     The proof will follow using Proposition \ref{btoham}, and Gilbert bound \cite[Theorem 2.8.1]{pless2003} for the Hamming distance.
\end{proof}
\begin{example}\label{gvexamlple}
Let $b=3$, $n=10$ and $d=8$. Then by Corollary \ref{Gilbert}, we have
\begin{equation*}
A_3(10,8,2)\geq \frac{2^{30}}{\sum_{i=0}^{7}{{n}\choose {i}}(7)^i}=8.368    
\end{equation*}
Thus, $A_3(10,8,2)\geq 9$. While by \cite[Theorem 3.2]{Song2018}, $A_3(10,8,2)\geq 6$. Hence, Corollary \ref{Gilbert} gives a better lower bound on code size. By computer search, we have $A_3(10,8,2)=23$.
\end{example}
\begin{corollary}[Elias bound for $b$-symbol distance]\label{elias}
Let $r=1-q^{-b}$. Let $w\leq rn$ and  $w^2-2rnw + rnd >0$. Then
    $$A_b(n,d,q)\leq \frac{rnd}{w^2-2rnw + rnd}\cdot\frac{q^{bn}}{\sum_{i=0}^{w}{{n}\choose {i}}(q^b-1)^i}.$$
\end{corollary}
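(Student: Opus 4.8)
The plan is to reuse the reduction that drives the preceding corollaries, namely to pass from the $b$-symbol metric over $\mathbb{F}_q$ to the ordinary Hamming metric over the extension field $\mathbb{F}_{q^b}$ by means of Proposition~\ref{btoham}, and then to apply the classical Elias bound over the enlarged alphabet.

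First I would let $C$ be a code realizing $|C| = A_b(n,d,q)$ with $d_b(C) \geq d$. By Proposition~\ref{btoham}, the map $\delta$ yields a code $C' = \{\delta(c) : c \in C\}$ over $\mathbb{F}_{q^b}$ of the same length $n$ and the same cardinality $|C'| = |C|$, whose minimum Hamming distance is exactly $d_b(C) \geq d$, and which preserves weights in the sense that $\delta$ carries a codeword of $b$-symbol weight $w$ to one of Hamming weight $w$.

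Next I would invoke the classical Elias bound for the Hamming metric over an alphabet of size $Q = q^b$. For that alphabet the Plotkin parameter is $1 - Q^{-1} = 1 - q^{-b}$, which is precisely the quantity $r$ in the statement; moreover the hypotheses $w \leq rn$ and $w^2 - 2rnw + rnd > 0$ are exactly the conditions under which the Hamming Elias bound applies. The bound then reads
$$|C'| \leq \frac{rnd}{w^2 - 2rnw + rnd}\cdot \frac{Q^{n}}{\sum_{i=0}^{w}\binom{n}{i}(Q-1)^i}.$$
Substituting $Q = q^b$, so that $Q^n = q^{bn}$ and $(Q-1)^i = (q^b-1)^i$, and using $A_b(n,d,q) = |C| = |C'|$, produces the asserted inequality.

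There is no serious obstacle here once Proposition~\ref{btoham} is available: the entire content lies in that isometric correspondence, and the Elias bound for $b$-symbol distance is an immediate specialization of the Hamming Elias bound under the substitution $q \mapsto q^b$. The only point requiring attention is the bookkeeping of checking that the Plotkin parameter $r$ and the two hypotheses transfer verbatim under the alphabet change, which they do by the very definition $r = 1 - q^{-b}$.
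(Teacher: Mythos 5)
Your proposal is correct and is essentially identical to the paper's own proof: the paper likewise invokes Proposition~\ref{btoham} to pass to a Hamming-metric code over $\mathbb{F}_{q^b}$ and then applies the classical Elias bound for the Hamming distance, with the Plotkin parameter $r = 1 - q^{-b}$ arising from the alphabet size $q^b$ exactly as you describe. Your write-up merely spells out the substitution $Q = q^b$ that the paper leaves implicit.
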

\begin{proof}
     The proof will follow using Proposition \ref{btoham}, and Elias bound \cite[Theorem 2.5.3]{pless2003} for the Hamming distance.
\end{proof}

\section{ $b$-symbol linear programming  bound}\label{slpbound}
Linear programming bound is the most powerful bound that gives an upper bound on code size. This bound, as indicated by its name, is derived by solving a linear optimization problem.  In \cite{Elishco2020}, the authors derived linear programming bound specially for symbol-pair codes ($b=2$). In this section, we establish linear programming bound for $b$-symbol codes for any arbitrary $b\geq 2$. Our approach draws from techniques used in the Hamming distance.  Notably, we demonstrate that when $b=1$, our result aligns with the classical linear programming bound for Hamming distance.

Let $C_s$ be a simplex code over $\mathbb{F}_q$ with parameters $\left [\frac{q^b-1}{q-1},b,d_H=q^{b-1}\right ]$ and generator matrix $G_s$. Define a map $\Phi:\mathbb{F}_q^b\to C_s$ as $\Phi(x)=x\cdot G_s$. It is easy to see that $\Phi$ is an $\mathbb{F}_q$-linear bijective map.

Let $C$ be an $(n,M,d_b)$ code over $\mathbb{F}_q$. For $c=(c_0,c_1,\dots,c_{n-1})\in C$, denote $\overline{c}_i=(c_i,c_{i+1},\dots, c_{i+b-1})$ (indices taken modulo $n$). Then $\pi(c)=(\overline{c}_0,\overline{c}_1,\dots,\overline{c}_{n-1})$. We concatenate code $C$ with $C_s$ as
\begin{equation*}
    C_H=\{\Phi(\pi(c))=(\Phi(\overline{c}_0),\Phi(\overline{c}_1),\dots,\Phi(\overline{c}_{n-1})) : \ c=(c_0,c_1,\dots,c_{n-1})\in C\}. \tag{$\ast$}
\end{equation*}

\begin{lemma}
    If $C$ is a code over $\mathbb{F}_q$ with parameters $(n,M,d_b)$ then $C_H$ is a code over $\mathbb{F}_q$ with parameters $\left ( \frac{(q^b-1)n}{q-1},M,d_H=q^{b-1}d_b\right )$.
\end{lemma}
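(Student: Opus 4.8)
The plan is to verify the three claimed parameters of $C_H$ separately: its length, its size $M$, and its minimum Hamming distance $q^{b-1}d_b$. The key structural fact I will exploit throughout is that the simplex code $C_s$ is a \emph{one-weight} code: every nonzero codeword of $C_s$ has Hamming weight exactly $q^{b-1}$ (the minimum distance recorded in its parameters). Combined with the $\mathbb{F}_q$-linearity and bijectivity of $\Phi$, this is precisely what converts the block-counting $b$-symbol distance of $C$ into the symbol-counting Hamming distance of $C_H$.

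The length is immediate: each component $\overline{c}_i\in\mathbb{F}_q^b$ is sent by $\Phi$ to a codeword of $C_s$, which lies in $\mathbb{F}_q^{(q^b-1)/(q-1)}$, and $\pi(c)$ has $n$ such components, so $\Phi(\pi(c))$ has length $n\cdot\frac{q^b-1}{q-1}=\frac{(q^b-1)n}{q-1}$. For the size, I will show the assignment $c\mapsto\Phi(\pi(c))$ is injective on $C$. The map $\pi$ is injective because $c$ is recoverable from $\pi(c)=(\overline{c}_0,\dots,\overline{c}_{n-1})$ (for instance, $c_i$ is the first coordinate of $\overline{c}_i$), and $\Phi$ is bijective by the discussion preceding the lemma, so applying $\Phi$ coordinatewise preserves injectivity. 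Hence $|C_H|=|C|=M$.

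The heart of the argument is the distance computation. Fix distinct $c,c'\in C$ and expand the Hamming distance blockwise,
\[
d_H\bigl(\Phi(\pi(c)),\Phi(\pi(c'))\bigr)=\sum_{i=0}^{n-1} d_H\bigl(\Phi(\overline{c}_i),\Phi(\overline{c}'_i)\bigr).
\]
For each index $i$ I distinguish two cases. If $\overline{c}_i=\overline{c}'_i$, the two images coincide and the $i$-th block contributes $0$. If $\overline{c}_i\neq\overline{c}'_i$, then by linearity $\Phi(\overline{c}_i)-\Phi(\overline{c}'_i)=\Phi(\overline{c}_i-\overline{c}'_i)$ is a nonzero codeword of $C_s$, so by the one-weight property its Hamming weight is exactly $q^{b-1}$; thus the block contributes $q^{b-1}$. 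Summing, the distance equals $q^{b-1}$ times the number of indices $i$ with $\overline{c}_i\neq\overline{c}'_i$, which is precisely $d_H(\pi(c),\pi(c'))=d_b(c,c')$. Taking the minimum over all distinct pairs yields $d_H(C_H)=q^{b-1}d_b$.

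The one genuine ingredient — and the step I expect to require the most care — is the invocation of the one-weight property of the simplex code, together with the observation that the blockwise distance depends only on \emph{whether} each $\overline{c}_i$ differs from $\overline{c}'_i$ and never on \emph{how} it differs. Everything else is bookkeeping; in particular, the identification of the count $|\{i:\overline{c}_i\neq\overline{c}'_i\}|$ with $d_b(c,c')$ is simply the definition of the $b$-symbol distance applied to $\pi(c)$ and $\pi(c')$.
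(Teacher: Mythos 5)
Your proof is correct and follows essentially the same route as the paper's: both rest on the $\mathbb{F}_q$-linearity and bijectivity of $\Phi$ together with the one-weight property of the simplex code $C_s$, which converts each differing block $\overline{c}_i\neq\overline{c}'_i$ into exactly $q^{b-1}$ differing Hamming coordinates. Your write-up is somewhat more careful — it argues with distances of pairs rather than weights of single codewords (relevant since $C$ need not be linear) and spells out the injectivity of $c\mapsto\Phi(\pi(c))$ — but this is an elaboration of the paper's argument, not a different one.
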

\begin{proof}
    Since $\Phi$ is an $\mathbb{F}_q$-linear bijection, $C_H$ is a code of length $\frac{(q^b-1)n}{q-1}$ and cardinality $M$. Let $c\in C$ with $b$-symbol weight $wt_b(c)$. Then the Hamming weight of $(\Phi(\overline{c}_0),\Phi(\overline{c}_1),\dots,\Phi(\overline{c}_{n-1}))$ is $q^{b-1}wt_b(c)$,  since the simplex code $C_s$ is a constant weight code with constant weight $q^{b-1}$.  Thus, $C_H$ is a $\left ( \frac{(q^b-1)n}{q-1},M,d_H=q^{b-1}d_b\right )$ code.
\end{proof}
 Let $W_{C}^b(x)$ and $W_{C_H}^H(x)$ be the weight enumerator of $C$ with respect to the $b$-symbol distance and of $C_H$ with respect to the Hamming distance, respectively.
 \begin{lemma}\cite{Luo2024}
Let $C$ and $C_H$ be the codes as the above. Then
$$W_{C}^b(x)=1+A_1^bx+A_2^bx^2+\dots+A_n^bx^n$$
if and only if
$$W_{C_H}^H(x)=1+A_1^bx^{q^{b-1}}+A_2^bx^{2q^{b-1}}+\dots+A_n^bx^{nq^{b-1}}$$
 \end{lemma}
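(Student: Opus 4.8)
The plan is to read off both weight enumerators from a single weight-preserving correspondence between $C$ and $C_H$, so that the claimed equivalence reduces to the observation that the two generating functions are related by the monomial substitution $x\mapsto x^{q^{b-1}}$. Since this statement merely repackages the weight-scaling fact already established in the preceding lemma, I expect no genuine obstacle; the only point demanding care is that the reindexing be coefficient-preserving, which is what makes the equivalence exact in both directions rather than a one-sided implication.

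First I would observe that the map $c\mapsto \Phi(\pi(c))$ is a bijection from $C$ onto $C_H$. Indeed, $\pi$ is injective on $\mathbb{F}_q^n$ and $\Phi$ is an $\mathbb{F}_q$-linear bijection, so their composite is injective, and it is surjective onto $C_H$ by the defining formula $(\ast)$. Next I would invoke the weight-scaling identity proved in the preceding lemma: if $c\in C$ has $b$-symbol weight $w$, then $\Phi(\pi(c))$ has Hamming weight $q^{b-1}w$. This holds because each block $\overline{c}_i$ is either zero, whence $\Phi(\overline{c}_i)=0$ contributes Hamming weight $0$, or nonzero, whence $\Phi(\overline{c}_i)$ is a nonzero simplex codeword of Hamming weight exactly $q^{b-1}$; summing over the $wt_b(c)$ nonzero blocks yields total Hamming weight $q^{b-1}\,wt_b(c)$.

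Combining these two facts, if $A_w^b$ denotes the number of codewords of $C$ of $b$-symbol weight $w$, then the bijection carries these $A_w^b$ codewords exactly onto the codewords of $C_H$ of Hamming weight $q^{b-1}w$, and no codeword of $C_H$ has Hamming weight outside $\{\,q^{b-1}w : 0\le w\le n\,\}$. Hence the Hamming weight distribution of $C_H$ is $A_w^b$ in degree $q^{b-1}w$ for $0\le w\le n$ and zero in every other degree. Passing to generating functions gives
$$W_{C_H}^H(x)=\sum_{w=0}^{n}A_w^b\,x^{q^{b-1}w}=W_{C}^b\!\left(x^{q^{b-1}}\right),$$
which is precisely the asserted identity. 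Reading off the coefficients in either direction establishes both implications at once: the coefficient $A_w^b$ of $x^{w}$ in $W_{C}^b$ equals the coefficient of $x^{q^{b-1}w}$ in $W_{C_H}^H$, and conversely, so each weight enumerator determines the other. This completes the plan.
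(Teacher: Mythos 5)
Your proof is correct, and it is essentially the argument this paper relies on: the paper itself states this lemma as a citation to \cite{Luo2024} without proof, but your weight-scaling-plus-bijection reasoning is exactly the mechanism already established in the paper's preceding lemma (constant weight $q^{b-1}$ of nonzero simplex codewords, $\Phi\circ\pi$ a bijection preserving $w_b$ up to the factor $q^{b-1}$) and reused in Lemma \ref{aeqb}. Reading the claimed equivalence off as the substitution $W_{C_H}^H(x)=W_C^b\bigl(x^{q^{b-1}}\bigr)$, with coefficients matched in both directions, is precisely what makes the cited statement immediate from that machinery.
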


 Let $$B_i^b(C)=\frac{1}{M}|\{(u,v)\in C^2:\ d_b(u,v)=i\}$$
 and 
 $$B_i^H(C_H)=\frac{1}{M}|\{(x,y)\in C_H^2:\  d_H(x,y)=i\}  $$
 be distance distribution of $C$ with respect to $b$-symbol distance and of $C_H$ with respect to the Hamming distance, respectively.

\begin{lemma}\label{aeqb}
  \begin{enumerate}[(i)]
      \item 
     $B_j^H(C_H)= \begin{cases}
          B_i^b(C)\ \ \ \ \text{ if } \ \ j=iq^{b-1} \text{ for all } 1\leq i\leq n, \\
          0\ \ \ \ \ \ \ \ \ \ \ \ \  \text{ otherwise }
      \end{cases}$
      \item $\sum_{i=0}^n B_i^b(C)=M=\sum_{i=0}^N B_i^H(C_H)=\sum_{j=0}^n B_{jq^{b-1}}^H(C_H)$.
  \end{enumerate}  
\end{lemma}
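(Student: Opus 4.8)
The plan is to exhibit the concatenation map as a weight-scaling linear injection and then read off the distance distribution from it. Write $\psi = \Phi \circ \pi : \mathbb{F}_q^n \to \mathbb{F}_q^{(q^b-1)n/(q-1)}$ for the map sending $c$ to $\Phi(\pi(c))$, so that $C_H = \psi(C)$. First I would observe that $\psi$ is $\mathbb{F}_q$-linear (each coordinate of $\pi$ is a coordinate selection, hence linear, and $\Phi(x) = x G_s$ is linear) and injective (since $\pi$ is injective and $\Phi$ is a bijection). The essential point is that $\psi$ scales weights by exactly $q^{b-1}$: for every $z \in \mathbb{F}_q^n$ one has $w_H(\psi(z)) = q^{b-1} w_b(z)$. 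This is the same computation used to establish the parameters of $C_H$ — writing $\psi(z) = (\Phi(\overline{z}_0), \dots, \Phi(\overline{z}_{n-1}))$, each block $\Phi(\overline{z}_i)$ equals $0$ when $\overline{z}_i = 0$ and has Hamming weight exactly $q^{b-1}$ otherwise, because $C_s$ is a constant-weight simplex code and $\Phi(0)=0$; summing $q^{b-1}$ over the indices with $\overline{z}_i \neq 0$ gives $q^{b-1} w_b(z)$. Crucially this holds for all $z$, not just codewords, so $C$ need not be linear.

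Next I would convert this into a statement about distances. For any $u, v \in \mathbb{F}_q^n$, linearity gives $\psi(u) - \psi(v) = \psi(u-v)$, hence $d_H(\psi(u), \psi(v)) = w_H(\psi(u-v)) = q^{b-1} w_b(u-v) = q^{b-1} d_b(u,v)$, where the last equality is simply $w_b(u-v) = d_b(u,v)$. Since $\psi$ restricts to a bijection $C \to C_H$, the assignment $(u,v) \mapsto (\psi(u), \psi(v))$ is a bijection $C^2 \to C_H^2$ that sends the pairs at $b$-symbol distance $i$ precisely onto the pairs at Hamming distance $iq^{b-1}$. Dividing the resulting equality of cardinalities by $M$ yields $B_{iq^{b-1}}^H(C_H) = B_i^b(C)$ for $1 \le i \le n$. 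For the remaining case, every Hamming distance occurring in $C_H$ has the form $q^{b-1} d_b(u,v)$ and is therefore a multiple of $q^{b-1}$; so whenever $j$ is not of the form $iq^{b-1}$ there are no pairs $(x,y) \in C_H^2$ with $d_H(x,y) = j$, giving $B_j^H(C_H) = 0$. This establishes (i).

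Part (ii) is then a counting identity. Summing $B_i^b(C) = \frac{1}{M}|\{(u,v) \in C^2 : d_b(u,v)=i\}|$ over all $i$ counts each ordered pair in $C^2$ exactly once, so $\sum_{i=0}^n B_i^b(C) = |C|^2/M = M$; the identical argument in $C_H$ gives $\sum_{j=0}^N B_j^H(C_H) = |C_H|^2/M = M$, using $|C_H| = M$. Finally, by (i) the only nonzero terms of $\sum_{j=0}^N B_j^H(C_H)$ sit at the indices $jq^{b-1}$, so $\sum_{j=0}^N B_j^H(C_H) = \sum_{j=0}^n B_{jq^{b-1}}^H(C_H)$, closing the chain of equalities.

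The only delicate step is the weight-scaling identity, and its subtlety is not the arithmetic but the need to apply it to the difference $u-v$ of arbitrary codewords rather than to codewords themselves. This is exactly where I would lean on $\psi$ being linear on the full space $\mathbb{F}_q^n$ together with the constant-weight property of the simplex code $C_s$, so that no linearity hypothesis on the code $C$ is required.
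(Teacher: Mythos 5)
Your proof is correct and takes essentially the same route as the paper's: both hinge on the identity $d_H(\Phi(\pi(u)),\Phi(\pi(v)))=q^{b-1}d_b(u,v)$, derived from the linearity of the blockwise concatenation map together with the constant weight $q^{b-1}$ of the simplex code, and then transfer the distance distribution through the bijection $C\to C_H$. You simply make explicit the pair-counting argument and the vanishing of $B_j^H(C_H)$ at indices not of the form $iq^{b-1}$, which the paper's terse proof leaves implicit.
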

\begin{proof}
$(i)$ Let $u,v\in C$. Then $\Phi(\pi(u)),\Phi(\pi(v))\in C_H$ and 
\begin{equation*}
    \begin{split}
d_H(\Phi(\pi(u)),\Phi(\pi(v)))&=w_H(\Phi(\pi(u))-\Phi(\pi(v)))\\
&=w_H(\Phi(\pi(u)-\pi(v))) \ \ \ \text{ (since $\Phi$ is linear)}\\
&=q^{b-1}w_H(\pi(u)-\pi(v))\\
&=q^{b-1}d_b(u,v).
    \end{split}
\end{equation*}  
$(ii)$ It will follow from the fact that $\Phi$ is a bijection.
\end{proof}
For any $q$,  Krawtchouck polynomial  $K^{q,n}_k(x)$  of degree $k$, $0 \leq k \leq n$ is defined as
$$K^{n,q}_k(x)=\sum_{j=0}^k(-1)^j (q-1)^{k-j} \binom{x}{j}\binom{n-x}{k-j}.$$

The following lemma is well-known in the Hamming distance (for instance,  see \cite[Lemma 2.6.2]{pless2003}.

\begin{lemma}\label{rootunity}
  Let $\xi$ be $q$-th root of unity in complex numbers. Let $u\in \mathbb{Z}_q^N$ with $w_H(u)=iq^{b-1}$. Then
\begin{equation*}
    K^{N,q}_k(iq^{b-1})=\sum_{v\in \mathbb{Z}_q^N, \  w_H(v)=k}\xi^{u\cdot v}.\tag{$\ast \ast$}
\end{equation*}
\end{lemma}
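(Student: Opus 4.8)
The plan is to prove the identity by directly evaluating the character sum on the right-hand side and recognizing the outcome as the Krawtchouk polynomial. The starting observation is that $\xi^{u\cdot v}$ factors coordinatewise: writing $u=(u_1,\dots,u_N)$ and $v=(v_1,\dots,v_N)$, one has $\xi^{u\cdot v}=\prod_{j=1}^N \xi^{u_j v_j}$, where the exponents are read in $\mathbb{Z}_q$. Since $\xi$ is a $q$-th root of unity, the orthogonality relation $\sum_{c\in\mathbb{Z}_q}\xi^{ac}=q$ when $a=0$ and $0$ otherwise gives, upon deleting the term $c=0$, the single-coordinate evaluation
$$\sum_{c\in\mathbb{Z}_q\setminus\{0\}}\xi^{ac}=\begin{cases} q-1 & \text{if } a=0,\\ -1 & \text{if } a\neq 0.\end{cases}$$
This is the only analytic input needed; everything else is combinatorial bookkeeping.

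Next I would organize the sum $\sum_{w_H(v)=k}\xi^{u\cdot v}$ according to the support of $v$. A vector $v$ of Hamming weight $k$ is specified by choosing a $k$-subset $S\subseteq\{1,\dots,N\}$ to serve as its support and letting each $v_j$ for $j\in S$ range independently over $\mathbb{Z}_q\setminus\{0\}$, while $v_j=0$ for $j\notin S$. Coordinates outside $S$ contribute a factor $\xi^0=1$, so that
$$\sum_{w_H(v)=k}\xi^{u\cdot v}=\sum_{|S|=k}\ \prod_{j\in S}\Bigl(\sum_{c\neq 0}\xi^{u_j c}\Bigr),$$
and by the single-coordinate evaluation each factor equals $q-1$ or $-1$ according as $u_j=0$ or $u_j\neq 0$. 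Letting $T$ denote the support of $u$, so that $|T|=w_H(u)=iq^{b-1}$, a subset $S$ meeting $T$ in exactly $j$ positions contributes the product $(-1)^j(q-1)^{k-j}$, and the number of such $S$ is $\binom{|T|}{j}\binom{N-|T|}{k-j}$.

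Summing over the possible intersection sizes $j=0,\dots,k$ then yields
$$\sum_{w_H(v)=k}\xi^{u\cdot v}=\sum_{j=0}^k(-1)^j(q-1)^{k-j}\binom{iq^{b-1}}{j}\binom{N-iq^{b-1}}{k-j},$$
which is precisely $K^{N,q}_k(iq^{b-1})$ by the definition of the Krawtchouk polynomial. I do not expect any genuine obstacle here, since the result is classical and the argument depends on $u$ only through its weight $w_H(u)=iq^{b-1}$, which is exactly why the right-hand side is a function of the weight alone. The one point requiring a little care is keeping the two-type counting of supports $S$ (positions lying inside versus outside $T$) aligned with the two values of the single-coordinate character sum, so that the binomial coefficients land on $iq^{b-1}$ and $N-iq^{b-1}$ as in the definition.
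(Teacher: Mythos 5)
Your proof is correct and complete: it is the standard orthogonality-of-characters argument for this classical identity, which the paper itself does not reprove but simply cites from the literature (Lemma 2.6.2 of the Huffman--Pless reference), so you have in effect supplied exactly the proof the paper points to. The only point worth flagging is that the single-coordinate evaluation $\sum_{c\in\mathbb{Z}_q}\xi^{ac}=0$ for $a\neq 0$ requires $\xi$ to be a \emph{primitive} $q$-th root of unity, a hypothesis the paper's statement leaves implicit and which you should state when invoking it.
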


\begin{lemma}
Let $C$ be a code over $\mathbb{F}_q$ with parameters $(n,M,d_b)$ and $B_i^b(C)$, $0 \leq i \leq n$ be the distance distribution of $C$ with respect to $b$-symbol distance. Then 
    $$\sum_{i=0}^n B_i^b(C)K_k^{N,q}(iq^{b-1})\geq 0 \text{ for all integers } 0\leq k\leq n, \text{ where } N=\frac{(q^b-1)n}{q-1}$$
\end{lemma}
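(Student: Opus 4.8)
The plan is to reduce the claimed inequality to the classical nonnegativity of the Krawtchouck-transform of a distance distribution in the Hamming metric, applied to the concatenated code $C_H$. The key bridge is Lemma \ref{aeqb}, which tells us that the $b$-symbol distance distribution $\{B_i^b(C)\}$ of $C$ and the Hamming distance distribution $\{B_j^H(C_H)\}$ of $C_H$ are supported on the scaled indices: $B_{iq^{b-1}}^H(C_H)=B_i^b(C)$ and $B_j^H(C_H)=0$ unless $j$ is a multiple of $q^{b-1}$. So everything about the sum $\sum_{i=0}^n B_i^b(C)K_k^{N,q}(iq^{b-1})$ can be rewritten purely in terms of the Hamming data of $C_H$ living in $\mathbb{Z}_q^N$ with $N=\frac{(q^b-1)n}{q-1}$.

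First I would rewrite the target sum using Lemma \ref{aeqb}(i) as
$$\sum_{i=0}^n B_i^b(C)K_k^{N,q}(iq^{b-1}) = \sum_{i=0}^n B_{iq^{b-1}}^H(C_H)K_k^{N,q}(iq^{b-1}) = \sum_{j=0}^N B_j^H(C_H)K_k^{N,q}(j),$$
where the last equality holds because every index $j$ that is not of the form $iq^{b-1}$ contributes a zero coefficient $B_j^H(C_H)=0$. Thus the quantity I must show is nonnegative is exactly the standard Krawtchouck sum $\sum_{j=0}^N B_j^H(C_H)K_k^{N,q}(j)$ attached to the code $C_H$ over $\mathbb{F}_q$ of length $N$.

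Next I would establish the nonnegativity of this standard sum by the usual character-sum argument. Using the definition $B_j^H(C_H)=\frac{1}{M}|\{(x,y)\in C_H^2:\ d_H(x,y)=j\}|$ together with Lemma \ref{rootunity}, which identifies $K_k^{N,q}(j)$ with the character sum $\sum_{v:\ w_H(v)=k}\xi^{u\cdot v}$ for any fixed $u$ of weight $j$ (here $\xi$ is a primitive $q$-th root of unity), I would interchange the order of summation to obtain
$$M\sum_{j=0}^N B_j^H(C_H)K_k^{N,q}(j) = \sum_{v:\ w_H(v)=k}\ \sum_{x,y\in C_H}\xi^{(x-y)\cdot v} = \sum_{v:\ w_H(v)=k}\left|\sum_{x\in C_H}\xi^{x\cdot v}\right|^2 \geq 0.$$
Each inner term is a squared modulus of a complex number, hence nonnegative, and dividing through by $M>0$ gives the claim. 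The main subtlety to check carefully is the applicability of Lemma \ref{rootunity}: it is stated for $u$ with $w_H(u)=iq^{b-1}$, so I must verify that every nonzero contribution to the rewritten sum indeed comes from pairs $(x,y)$ with $d_H(x,y)=w_H(x-y)$ a multiple of $q^{b-1}$, which is guaranteed precisely by Lemma \ref{aeqb}(i). With that alignment in place the character-sum identity applies uniformly across all active weight classes, and the squared-modulus step closes the argument.
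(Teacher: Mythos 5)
Your proposal is correct and follows essentially the same route as the paper's own proof: both reduce the sum to the Hamming distance distribution of the concatenated code $C_H$ via Lemma \ref{aeqb}, invoke Lemma \ref{rootunity} to replace the Krawtchouck values $K_k^{N,q}(iq^{b-1})$ by character sums, and conclude by factoring the double sum over $C_H^2$ into a sum of squared moduli $\sum_{v}\left|\sum_{x\in C_H}\xi^{x\cdot v}\right|^2\geq 0$. The subtlety you flag --- that all Hamming distances in $C_H$ are multiples of $q^{b-1}$, so the character identity applies to every pair --- is exactly the point the paper uses implicitly when it passes from the sum over pairs with $w_H(x-y)=iq^{b-1}$ to the sum over all of $C_H^2$.
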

\begin{proof}
Let $C_H$ be the concatenated code. By Lemma \ref{aeqb}, we have 
\begin{equation*}
    \begin{split}
        M\sum_{i=0}^nB_i^b(C)K_k^{N,q}(iq^{b-1})=& M\sum_{i=0}^nB_{iq^{b-1}}^H(C_H)K_k^{N,q}(iq^{b-1})\\
        =&\sum_{i=0}^n\ \sum_{\substack{(x,y)\in C_H^2, \\ w_H(x-y)=iq^{b-1}}}\ \ \sum_{v\in \mathbb{Z}_q^N,\ w_H(v)=k} \xi^{(x-y)\cdot v} \text{ (by Lemma \ref{rootunity})}\\
        =&\sum_{(x,y)\in C_H^2}\sum_{\substack{v\in \mathbb{Z}_q^N,\\ w_H(v)=k}}\xi^{x\cdot v-y\cdot v} \\
        =&\sum_{\substack{v\in \mathbb{Z}_q^N,\\ w_H(v)=k}} \sum_{x\in C_H}\xi^{x\cdot v} \sum_{x\in C_H}\xi^{-x\cdot v}\\
        =& \sum_{\substack{v\in \mathbb{Z}_q^N,\\ w_H(v)=k}} |\sum_{x\in C_H}\xi^{x\cdot v}|^2\geq 0.
    \end{split}
\end{equation*}
This completes the proof.
\end{proof}

\begin{theorem}(Linear programming bound for $b$-symbol distance)\label{lpbound}
For given integers $q$, $n$, $d_b$ with $1\leq d_b\leq n$, we have
$$A_b(n,d,q)\leq \max \sum_{i=0}^n B_i^b$$
 subject to the constraints
 \begin{enumerate}[(1)]
     \item $B_0^b=1$, and $B_i^b=0$ for $1\leq i\leq d_b-1$.
     \item $B_i^b\geq 0$ for $d_b\leq i\leq n$.
     \item $\sum_{i=0}^n B_i^bK_k^{N,q}(iq^{b-1})\geq 0$  for all integers  $0\leq k\leq n$, where $N=\frac{(q^b-1)n}{q-1}$.
 \end{enumerate}
\end{theorem}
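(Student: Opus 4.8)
The plan is to derive the linear programming bound for the $b$-symbol distance by transporting the classical Hamming-distance LP argument across the concatenation map $\Phi \circ \pi$ developed in this section. The key insight is that everything has already been set up: the concatenated code $C_H$ over $\mathbb{F}_q$ of length $N = \frac{(q^b-1)n}{q-1}$ has distance distribution supported only on multiples $iq^{b-1}$ (by Lemma \ref{aeqb}(i)), and the crucial positivity constraint $\sum_{i=0}^n B_i^b K_k^{N,q}(iq^{b-1}) \geq 0$ has already been established in the preceding lemma. So the theorem is essentially an assembly step: I would show that any code $C$ with $d_b(C) \geq d_b$ produces a feasible point of the stated linear program whose objective value equals $|C|$, and hence the maximum of the program is an upper bound on $A_b(n,d,q)$.

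First I would let $C$ be any $(n,M,d_b)$ code over $\mathbb{F}_q$ and consider its $b$-symbol distance distribution $(B_0^b, B_1^b, \dots, B_n^b)$. I would verify that this tuple satisfies all three constraints of the program. Constraint (1) follows because $B_0^b = \frac{1}{M}|\{(u,u) : u \in C\}| = 1$ and, since the minimum $b$-symbol distance is at least $d_b$, there are no pairs at distance $1 \leq i \leq d_b - 1$, forcing $B_i^b = 0$ there. Constraint (2) is immediate since each $B_i^b$ counts a nonnegative quantity. Constraint (3) is exactly the content of the positivity lemma proved just above. Finally, by the normalization identity in Lemma \ref{aeqb}(ii), the objective value at this feasible point is $\sum_{i=0}^n B_i^b = M = |C|$.

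Since $C$ is an arbitrary code meeting the distance requirement, and in particular the optimal code realizing $M = A_b(n,d,q)$ yields a feasible point with objective value $A_b(n,d,q)$, the maximum of the linear program is at least $A_b(n,d,q)$; that is,
$$A_b(n,d,q) \leq \max \sum_{i=0}^n B_i^b$$
taken over all tuples $(B_i^b)$ satisfying constraints (1)--(3), which is precisely the claimed bound. I would close by remarking that the optimum is taken over \emph{all} real tuples satisfying the constraints, not merely those arising from genuine codes, so the LP relaxation only enlarges the feasible region and the inequality is preserved; one may also note the $b=1$ specialization recovers the classical Hamming LP bound, as promised in the section's introduction.

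The step requiring the most care is confirming that the inequality goes in the correct direction under the relaxation. The subtlety is that the LP is a maximization over a superset of the code-derived distance distributions, so I must be explicit that the \emph{actual} distance distribution of an optimal code is one admissible point, making $A_b(n,d,q)$ a lower bound for the LP optimum and hence an upper bound for the code size in the stated sense. Everything analytically hard—the translation of weights by factor $q^{b-1}$, the Krawtchouk identity, and the positive-semidefiniteness yielding constraint (3)—has already been discharged in the lemmas, so no genuinely new computation is needed here; the proof is a short verification that the feasible-point tuple satisfies the constraints and attains the objective value $|C|$.
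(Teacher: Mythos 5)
Your proposal is correct and takes essentially the same route as the paper: Theorem \ref{lpbound} is stated there without a separate proof precisely because it follows, exactly as you argue, by assembly from the preceding lemmas --- the distance distribution of an optimal code is a feasible point of the LP (constraint (3) being the Krawtchouk positivity lemma, the normalization $\sum_{i=0}^n B_i^b = M$ coming from Lemma \ref{aeqb}), so the LP maximum dominates $A_b(n,d,q)$. Your extra care about the direction of the inequality under relaxation is exactly the right point to be explicit about, and nothing further is needed.
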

\begin{corollary}(Linear programming bound for the Hamming distance)
    For given integers $q$, $n$, $d_H$ ($b=1$) with $1\leq d_H\leq n$, we have
$$A_H(n,d,q)\leq \max \sum_{i=0}^n B_i$$
 subject to the constraints
 \begin{enumerate}[(1)]
     \item $B_0=1$, and $B_i=0$ for $1\leq i\leq d_H-1$.
     \item $B_i\geq 0$ for $d_H\leq i\leq n$.
     \item $\sum_{i=0}^n B_iK_k^{n,q}(i)\geq 0$  for all integers  $0\leq k\leq n$.
 \end{enumerate}
\end{corollary}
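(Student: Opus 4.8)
The plan is to recognize the stated maximization as a \emph{linear-programming relaxation} of the set of genuine distance distributions, so that the $b$-symbol distance distribution of any optimal code is itself a feasible point of the program whose objective value equals the code's size. First I would fix an optimal code: let $C$ be an $(n,M,d_b)$ code over $\mathbb{F}_q$ with $M=A_b(n,d,q)$ and $d_b(C)\geq d$. I would then form the vector $(B_0^b(C),B_1^b(C),\dots,B_n^b(C))$ of its $b$-symbol distance distribution, as defined above, and check that it satisfies each of the three families of constraints.

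Verifying constraints $(1)$ and $(2)$ is immediate from the definition of $B_i^b(C)$ as a normalized count of ordered codeword pairs at $b$-symbol distance $i$. The only pairs at distance $0$ are the diagonal pairs $(u,u)$, so $B_0^b(C)=M/M=1$; since $d_b(C)\geq d=d_b$, no two distinct codewords lie at distance strictly between $0$ and $d_b$, which forces $B_i^b(C)=0$ for $1\leq i\leq d_b-1$; and the nonnegativity required in $(2)$ is automatic for a count. Constraint $(3)$ is precisely the Krawtchouk inequality $\sum_{i=0}^n B_i^b(C)K_k^{N,q}(iq^{b-1})\geq 0$ established in the lemma immediately preceding this theorem, whose proof already routes the $b$-symbol data through the concatenated Hamming code $C_H$ and the root-of-unity identity $(\ast \ast)$.

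Finally I would evaluate the objective at this feasible point. Summing over all distances counts every ordered pair of codewords exactly once, so $\sum_{i=0}^n B_i^b(C)=\frac{1}{M}|C^2|=M$, which is exactly the content of Lemma \ref{aeqb}$(ii)$. Hence the distance distribution of $C$ is a feasible solution of the program attaining objective value $M=A_b(n,d,q)$, and therefore the maximum over all feasible $(B_i^b)$ is at least $A_b(n,d,q)$, which is the asserted inequality.

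The genuine content of the argument lives entirely in the preparatory lemmas: the concatenation construction $C\mapsto C_H$ of $(\ast)$, the distance-distribution correspondence of Lemma \ref{aeqb}, and the Krawtchouk positivity above. Consequently I expect no real obstacle in the theorem itself. The one conceptual point worth stating explicitly is that the feasible region of the program is strictly larger than the set of distributions that actually arise from codes: we discard the requirement that the $B_i^b$ come from a genuine code and retain only their linear consequences, so the optimum is an \emph{upper} bound rather than an exact value. The corollary for $b=1$ then follows by direct specialization, since $q^{b-1}=1$ and $N=n$ collapse constraint $(3)$ to the classical Krawtchouk constraint $\sum_{i=0}^n B_iK_k^{n,q}(i)\geq 0$.
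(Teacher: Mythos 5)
Your proposal is correct and takes exactly the route the paper intends: the paper states the theorem and this corollary without explicit proof because the preceding lemmas (the concatenation $(\ast)$, Lemma \ref{aeqb}, and the Krawtchouk positivity lemma) already establish that the $b$-symbol distance distribution of any optimal code is a feasible point of the program with objective value $A_b(n,d,q)$, and the $b=1$ case follows since $q^{b-1}=1$ and $N=n$. Your write-up simply makes this standard feasibility argument explicit, including the correct observation that the feasible region strictly contains the set of genuine code distributions, which is why the program yields an upper bound.
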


To illustrate the above linear programming bound, we provide an explicit example.
\begin{example}
Let $b=2, q=2, n=11$ and $d_b=10$. Then $N=33$ and $B_0^b=1$ and $B_i^b=0$ for $1\leq i\leq 9$. Thus linear programming problem is 
\begin{equation}
    \text{maximize}\ \  z=1+B_{10}^b+B_{11}^b
\end{equation}
subject to constraints
\begin{equation*}
\begin{split}
    B_{10}^b,B_{11}^b&\geq 0\\
    K_k^{33}(0)+ K_k^{33}(10\cdot 2)B_{10}^b+K_k^{33}(11\cdot 2)B_{11}^b&\geq 0\ \ \ \text{ for } 0\leq k\leq 11\\
    i.e., K_k^{33}(0)+ K_k^{33}(20)B_{10}^b+K_k^{33}(22)B_{11}^b&\geq 0\ \ \ \text{ for } 0\leq k\leq 11
  \end{split}  
\end{equation*}
By evaluating values of $K_k^{33}(0),K_k^{33}(20),K_k^{33}(22)$ for $0\leq k\leq 11$, we have 
\begin{equation*}
    \begin{split}
        1+ B_{10}^b + B_{11}^b&\geq 0\\
33 - 7B_{10}^b - 11B_{11}^b&\geq 0\\
528 + 8B_{10}^b + 44B_{11}^b&\geq 0\\
5456 + 56B_{10}^b - 44B_{11}^b&\geq 0\\
40920 - 160B_{10}^b - 220B_{11}^b&\geq 0\\
237336 - 112B_{10}^b + 748B_{11}^b&\geq 0\\
1107568 + 904B_{10}^b - 308B_{11}^b&\geq 0\\
4272048 - 456B_{10}^b - 2508B_{11}^b&\geq 0\\
13884156 - 2652B_{10}^b + 4488B_{11}^b&\geq 0\\
38567100 + 3380B_{10}^b + 1760B_{11}^b&\geq 0\\
92561040 + 4264B_{10}^b - 13156B_{11}^b&\geq 0\\
193536720 - 10088B_{10}^b + 9316B_{11}^b&\geq 0\\
    \end{split}
\end{equation*}
Upon solving this linear programming problem, we obtain $z\cong 5.71428$. Therefore, by LP bound, $A_2(11,10,2)\leq 5$. In comparison, sphere-packing bound (in \cite{Yuval2011}) provides $A_2(11,10,2)\leq 26$, while Johnson bound (in \cite{Elishco2020}) gives $A_2(11,10,2)\leq 16$. Also, an exhaustive computer search confirms that the exact value of $A_2(11,10,2)=4$.
\end{example}

\section*{Conclusion}
Codes in the Hamming distance are effective for communication channels that process one symbol at a time, such as binary symmetric channels, where errors typically affect individual symbols. These channels can correct errors by measuring the Hamming distance between codewords, which counts the number of positions where two codewords differ. However, there are channels that read multiple symbols simultaneously, such as in high-density data storage systems, where physical limitations or resolution constraints lead to errors affecting multiple symbols at once. In these cases, traditional Hamming distance codes are insufficient, as they are designed to handle errors in single symbols. To address this issue, the concept of $b$-symbol distance (with $b \geq 1$) has been introduced. When $b = 1$, this approach coincides with the traditional Hamming distance. By using b-symbol read channels, we can detect and correct errors that affect multiple symbols in these more complex channels, providing a more accurate error correction strategy than the traditional Hamming distance.

Establishing bounds on code size is a central challenge in coding theory. While numerous bounds have been derived for the Hamming distance, relatively few have been known for the $b$-symbol distance. In this paper, we extend the analysis of bounds for $b$-symbol distance and present several new upper and lower bounds. These include the linear programming bound, a recurrence relation on code size, the even Johnson bound, the restricted Johnson bound, the Gilbert-Varshamov-type bound, and the Elias bound for the $b$-symbol read channels with $b \geq 2$. Moreover, we provide examples that illustrate how the Gilbert-Varshamov bound we derive offers a stronger lower bound compared to the one presented in \cite{Song2018} (see Example \ref{gvexamlple}). We believe there is potential for further improvement in the sphere-packing and Gilbert-Varshamov bounds, and we identify this as a future direction. For practical applications, the next step is to construct codes that achieve these established bounds, ensuring both the reliability and efficiency of these codes. We encourage researchers to explore the construction of codes that meet the bounds presented in this paper.

\bibliographystyle{abbrv}
\bibliography{ref}

\end{document}